\documentclass{article}
\usepackage[T1]{fontenc}

\usepackage{geometry}
\usepackage[english]{babel}
\usepackage[dvipsnames]{xcolor}
\usepackage{amssymb}
\usepackage{amsmath}
\usepackage{amsfonts}
\usepackage{amsthm}
\usepackage{appendix}
\usepackage{comment}
\usepackage{tikz} 
\usepackage{subcaption}
\usepackage{authblk}
\usepackage{latexsym} %Des symboles de maths en +
\usepackage{xspace}

\usepackage[breaklinks,hidelinks]{hyperref} 
\usepackage{xcolor} 
\usepackage[]{knowledge} 
\usepackage{enumerate}
\usepackage[shortlabels]{enumitem}

\newtheorem{theorem}{Theorem}[section]

\newtheorem{corollary}[theorem]{Corollary}

\newtheorem{lemma}[theorem]{Lemma}
\newtheorem{definition}[theorem]{Definition}
\newtheorem{claim}{Claim}[]
\newtheorem{conjecture}[theorem]{Conjecture}

\knowledgeconfigure{notion}
\knowledgeconfigure{quotation}

\title{Zero Forcing Sets in Temporal Graphs\thanks{This work is an initiative from the Journées Graphes et Algorithmes 2025. This research was made possible through the French government IDEX-ISITE initiative 16-IDEX-0001 (CAP 20-25), International Research Center ``Innovation Transportation and Production Systems'' of the I-SITE CAP 20-25 and grant ANR-22-CE48-0001 (projet TEMPOGRAL). }}

\author[1]{Julien Baste}
\author[2]{Simon Dreyer}
\author[3]{Clara Marcille}
\author[4]{Mikaël Rabie}
\author{Ronan Toullec-Streicher}

\affil[1]{Univ. Lille, CNRS, Centrale Lille, UMR 9189 CRIStAL, F-59000 Lille, France}
\affil[2]{LIRMM, Université de Montpellier, CNRS, Montpellier, France }
\affil[3]{Université Clermont Auvergne, LIMOS, Clermont-Ferrand, France}
\affil[4]{IRIF, CNRS and Université Paris Cité, France}

\date{}

\definecolor{darkcyan}{rgb}{0, 0.55, 0.55}
\definecolor{darkgreen}{rgb}{0, 100, 0}

\newcommand{\x}[1]{x_{#1}}
\newcommand{\nx}[1]{\neg{x_{#1}}}
\newcommand{\vx}[1]{v_{#1}}
\newcommand{\nvx}[1]{\overline{v_{#1}}}
\newcommand{\C}[1]{u_{#1}}

\newcommand{\entry}{\texttt{entry}\xspace}

\newcommand{\probl}[3]{
\begin{flushleft}
\fbox{
\begin{minipage}{\textwidth}
\noindent {\sc #1}\\
          {\bf Input:} #2\\
          {\bf Output:} #3
\end{minipage}}
\medskip
\end{flushleft}
}

\begin{document}

\knowledgedirective{discrete}[black]{color={#1},emphasize,md}

\knowledge{notion}
	| Zero Forcing process
 	| Zero Forcing problem
 	| Zero Forcing@process

\knowledge{notion}
 | zero forcing sets@static
 | zero forcing set@static
 | zero forcing process@static

\knowledge{notion}
 |	 zero forcing number@static
 | size of a zero forcing set

\knowledge{notion}
|	 corrupted

\knowledge{notion}
	| corrupting function
	| temporal@rule
	| corrupts

\knowledge{notion}
	| non-corrupted
	| not corrupted

\knowledge{notion}
	| color-change rule
	| corruption rule
	| propagation
	| process@corruption
	| static@rule
	| corruption

\knowledge{notion}
	| temporal zero forcing set

\knowledge{notion}
	| partial witness

\knowledge{notion}
	| witness

\knowledge{notion}
	| Temporal Zero Forcing Set@problem
	| temporal problem@problem
	| temporal equivalent@problem
	| temporal zero forcing problem

\knowledge{notion}
	| Zero Forcing Set@problem
	| problem@static

\knowledge{notion}
	| temporal zero forcing number
	| TZ

\knowledge{notion}
	 |Multicolored Clique@problem

\knowledge{notion}
	| underlying graph
 | Underlying graph

\knowledge{notion}
	 |lifespan

\knowledge{color=black}
	| temporal graph
 	| temporal graphs

\knowledge{notion}
	| 1-step forcing set

\maketitle

\begin{abstract}
The Zero Forcing (or corruption) of a graph is the problem of finding a minimum-size ``corrupting'' set. It corresponds to a subset of its vertices that can corrupt the whole graph by iterating the following rule: if a corrupted vertex has exactly one neighbor that is not yet corrupted, the neighbor gets corrupted. The iteration of this process comes from the fact that the corruption of a vertex might enable new corruptions (from itself or some of its neighbors). For this reason, one can consider a step of corruption, where all the possible instances of the corruption rule are applied at once.

This paper investigates Zero Forcing on temporal graphs, where the topology of the graph evolves throughout the experiment. At each time step (or snapshot) of the graph, a step of corruption is resolved wherever possible. We study the problem of finding a minimum-size corrupting set such that the whole (temporal) graph is corrupted at the end of the experiment. We present a panorama of results, including NP-hardness in some not-so-restrictive scenarios, polynomial algorithms, and a solution to an open question when the whole graph must be corrupted in a single step. 
\end{abstract}

\section{Introduction}

Spreading information, or corruption, in a network is a way to study how information can move through entities. In particular, spreading may not be easy and can depend on restrictions on the entity spreading the information. Can an agent spread information to all its neighbors at once? Or one by one? Or does it need all its neighbors but one to already have it to spread information to the last one? Moreover, networks might change over time. This is the case in social networks where connections can evolve, or when moving agents make contact when they are close enough to each other. One can then ask, given a (dynamic) network, how many entities need the initial information for it to spread, and how fast.

In this article, we consider the ""Zero Forcing process"", which is a dynamic process to spread some corruption. Initially, some of the elements are "corrupted". At each time step, if all but one of a vertex's neighbors are "corrupted", the last neighbor becomes "corrupted". More formally, a graph $G$ starts with a subset of its vertices being \textit{"corrupted"}, and the corruption spreads through the following ""process@@corruption"": at each step, if a "corrupted" vertex $u$ has exactly one "non-corrupted" neighbor $v$, then $v$ becomes "corrupted", and so on until there is no possible "corruption" is possible, or until the whole graph is "corrupted". The goal of "Zero Forcing@@process" is to find the minimum size subset of vertices of an input graph such that, by applying the "corruption rule" as many times as necessary, the whole graph is eventually "corrupted". This process was introduced independently in the field of linear algebra~\cite{van2008zero} to bound the minimum rank of a graph, and in the field of quantum systems~\cite{PhysRevA.79.060305,PhysRevLett.99.100501}. Because of its applications in numerous areas, and in particular in linear algebra, the problem has drawn significant attention over the years. We refer the reader to Chapter 9 of~\cite{hogben2022inverse} for a survey of this notion. Even more recently, some connection has been drawn between the "Zero Forcing problem" and the celebrated Burning Tree problem~\cite{abiad2026propagation}. 

%The ""Zero Forcing process"" is a dynamic process on a graph to spread some corruption. A graph $G$ starts with a subset of its vertices said to be \textit{"corrupted"}, and the corruption spreads through the following ""process@@corruption"": at each step, if a corrupted vertex $u$ has exactly one "non-corrupted" neighbor $v$, then $v$ becomes "corrupted", and so on until there is no possible corruption is possible, or until the whole graph is "corrupted". The goal of "Zero Forcing@@process" is to find the minimum size subset of vertices of an input graph such that, by applying the "corruption rule" as many times as necessary, the whole graph is eventually "corrupted". This process was introduced independently in the field of linear algebra~\cite{van2008zero} to bound the minimum rank of a graph, and in the field of quantum systems~\cite{PhysRevA.79.060305,PhysRevLett.99.100501}. Because of its applications in numerous areas, and in particular in linear algebra, the problem has drawn significant attention over the years. We refer the reader to Chapter 9 of~\cite{hogben2022inverse} for a survey of this notion. Even more recently, some connection has been drawn between the "Zero Forcing problem" and the celebrated Burning Tree problem~\cite{abiad2026propagation}. 

A quite recent addition to the theory of "Zero Forcing@@process" is the notion of time in the dynamic process of "propagation". In~\cite{aazami2008hardness}, the author initiates the study of a time-constrained version of "zero forcing sets@@static". In this variant, the main focus is the "propagation" speed, that is, the number of steps required to fully corrupt a graph, assuming all currently possible applications of the "color-change rule" occur simultaneously. The minimum sum of the "size of a zero forcing set" and its "propagation" time is known as the \textit{throttling number}, and has since then been thoroughly studied on static graphs~\cite{aazami2008hardness,butler2013throttling,carlson2018throttling,hogben2023newstructuresapplicationsvariants}. In this work, we add a layer of temporality by considering topological variations of the graph, introducing the "Zero Forcing problem" on "temporal graphs", where at each time, the set of edges can change. This kind of question already exists in~\cite{9086120}, where the authors investigate topological perturbations of the graph under some restricted set of graph operations. 

Using the framework of "temporal graphs", we generalize all these operations and provide a variety of complexity results on this new variant of the "Zero Forcing problem". We prove that even in the case where the "underlying graph" (i.e., when we consider all the edges that appear at least once through time) is as simple as a star, the problem is NP-hard, and even $W[1]$-hard when parametrized by the size of the forcing set. The problem remains hard even when we bound each vertex's degree and the total time. On the other hand, we provide polynomial algorithms for bounded-degree temporal trees, as well as stars with a polylogarithmic number of time steps. For the single time-step case, which directly relates to the classical "Zero Forcing problem", we provide the first NP-hardness result, solving an open question about $1$-step corruption~\cite{aazami2008hardness,butler2013throttling}.

%For the latter, we give a polynomial-time algorithm to solve the "temporal zero forcing problem". We complete our study by answering different questions from the literature on static "zero forcing process@@static", namely $1$-step corruption~\cite{aazami2008hardness,butler2013throttling}.
This paper starts with some preliminaries in Section \ref{sec:prelim}. We provide some \texttt{NP}-hardness results in Section~\ref{sec:main-hard}. Section \ref{sec:lifespan1} investigates a case where the "corruption" has to be done in a unique step. In Section~\ref{sec:temporal-stars}, we explore different cases where the "underlying graph" is a star, and Section~\ref{sec:trees} is about the case where the "underlying graph" is a tree. We conclude the paper in Section~\ref{sec:ccl}.

\section{Preliminaries}\label{sec:prelim}

In this work, we use the following notations for basic graph notions.
We denote by $\mathbb{N}$ the set of non-negative integers.
Given two integers $a$ and $b$, we denote by $[a,b]$ the set $\{i \in \mathbb{N} \mid a \leq i \leq b\}$.
A graph $G$ is a pair $(V,E)$ where $V$ is a set and $E \subseteq V \times V$.
$V$ is the set of \emph{vertices} and $E$ is the set of \emph{edges}.
Given a vertex $v \in V$, we denote by $N_G(v) = \{v' \mid \{v,v'\} \in E\}$ the open neighborhood of $v$ in $G$, and
$N_G[v] = N_G(v) \cup \{v\}$ the close neighborhood of $v$. 
We say a set $S\subset V$ is a ""zero forcing set@@static"" if, for all $u\in V$, either $u\in S$, or $u$ can be reached using the so-called ``propagation rule'', which is as follows. Vertices in $S$ are said to be \textit{"corrupted"}, and any vertex that is the only "non-corrupted" neighbor of a "corrupted" vertex is "corrupted" as well, and the "propagation" rule is applied iteratively.
The ""zero forcing number@@static"" of a graph $G=(V, E)$ is the minimum size of a "zero forcing set@@static".

\begin{definition}
  A ""temporal graph"" is a triplet $\mathcal{G}=(V,E,\lambda)$ where  $V$ is the set of vertices, $E \subseteq {2 \choose V}$ is the set of edges and $\lambda: E \to 2^{\mathbb{N}}$.
  The ""lifespan"" of $\mathcal{G}$, denoted $T(\mathcal{G})$, is the last time at which there is an edge, i.e., $T(\mathcal{G})=\max\bigcup_{e\in E}\lambda(e)$.
  For $i \in \mathbb{N}$, the snapshot $i$ of $\mathcal{G}$ is $G_i(\mathcal{G})=(V, E_i)$ where $E_i=\{e\in E|i\in\lambda(e)\}$ and $E= \cup_{0\leq i \leq T(\mathcal{G})} E_i$.

  The ""underlying graph"" of $\mathcal{G}$ is the graph without its temporality, i.e., $(V,E)$.
\end{definition}
Note that in the definition of a "temporal graph", the $\lambda$ function describes the temporality of each edge, i.e., the timestamps at which the edge is present.
  For any edge $e\in E$, $\lambda(e)$ denotes the times at which edge $e$ appears.
When $\mathcal{G}$ is clear from the context, we write, for each $i \in \mathbb{N}$, $G_i$ instead of $G_i(\mathcal{G})$, and $T$ instead of $T(\mathcal{G})$.

In this work, we consider specific subclasses of "temporal graphs", based on their "underlying graph". Throughout the paper, whenever we refer to some structural property on a "temporal graph" (that is, being a clique, being bipartite, etc.), we refer to its "underlying graph" having this property. For instance, a "temporal graph" $\mathcal{G}=(V,E,\lambda)$ is a temporal clique if, for all pairs of distinct vertices $u, v \in V$, we have $uv\in E$ and $\lambda(uv)$ is non-zero. In particular, we do not care if some snapshot of $\mathcal{G}$ is not a clique.

\begin{definition}
  \label{def:corr}
  Given a "temporal graph" $\mathcal{G}=(V,E,\lambda)$ with "lifespan" $T$, we say that a function $\texttt{corr} : [0,T] \to 2^V$ is a ""corrupting function"" if 
  $\texttt{corr}(T) = V$ and 
  for each $i \in [1,T]$, 
  $v \in \texttt{corr}(i)$ if and only if $v \in \texttt{corr}(i-1)$ or there exists $v' \in \texttt{corr}(i-1)$ such that
  $N_{G_i}(v') = N \cup \{v\}$ with $N \subseteq \texttt{corr}(i-1)$,
  or such that $ N_{G_i}(v') \setminus \texttt{corr}(i-1) = \{v\}$ (i.e., $v$ is the only neighbor of $v'$ not "corrupted" in snapshot $i-1$).

  A set $S$ is a ""temporal zero forcing set"" of $\mathcal{G}$ if there exists a "corrupting function" $\texttt{corr}$ of $\mathcal{G}$ such that $S = \texttt{corr}(0)$.

We call the ""temporal zero forcing number"" of $\mathcal{G}$, noted $"TZ"(\mathcal{G})$, the minimum cardinality of a "temporal zero forcing set" of $\mathcal{G}$.
\end{definition}

In this article, given a "temporal graph" $\mathcal{G}=(V,E,\lambda)$, % on a vertex set $v$, and a corrupting function $\texttt{corr}$, we say that a vertex $v \in V$
we say that at a given time $i \in [1,T]$,  a vertex of $v$ is ""corrupted"" if $v \in \texttt{corr}(i)$ and ""not corrupted"" otherwise.
% Moreover a vertex is \emph{corrupted} at time $i \in [1,T]$, if $v \in \texttt{corr}(i) \setminus \texttt{corr}(i-1)$.
Given $i \in [1,T]$, if $v' \in \texttt{corr}(i-1)$ is such that
  $N_{G_i}(v') = N \cup \{v\}$ with $N \subseteq \texttt{corr}(i-1)$ for some $v \not \in \texttt{corr}(i-1)$, then we say that $v'$ "corrupts" $v$ at time $i$ and $v$ is "corrupted" after time $i$. 
  Intuitively, if a "corrupted" vertex has only one neighbor that is "not corrupted" at a given time, then this neighbor becomes "corrupted" after this step. 

We recall the static \textsc{"Zero Forcing Set@@problem"}, which is defined as follows:
\probl{""Zero Forcing Set@@problem""}
{A graph  $G=(V,E)$ and an integer $k$.}
{A "zero forcing set@@static" $S$ of $G$ of size at most $k$ or a correct output that such a set does not exist.}

We are now ready to introduce the temporal equivalent of the \textsc{"Zero Forcing Set@@problem"} problem.

\probl{""Temporal Zero Forcing Set@@problem""}
{A "temporal graph"  $\mathcal{G}=(V,E,\lambda)$ and an integer $k$.}
{A "temporal zero forcing set" $S$ of $\mathcal{G}$ of size at most $k$ or a correct output that such a set does not exist.}

Observe that while the statement of \textsc{"Temporal Zero Forcing Set@@problem"} seems very close to the original \textsc{"Zero Forcing Set@@problem"}, we think it is important to underline that the two corruption rules ("static@@rule" and "temporal@@rule") behave differently with regard to the dynamicity, in the sense that whenever a "corruption" happens in the temporal setting, any new "corruption" that becomes enabled will have to wait for a later snapshot.
However, one of the early results on \textsc{"Zero Forcing Set@@problem"} by Aazami, who proves the \texttt{NP}-hardness of the problem in~\cite{aazami2008hardness}, translates directly to \textsc{"Temporal Zero Forcing Set@@problem"}. Indeed, consider a static graph $G=(V, E)$ as a "temporal graph" $\mathcal{G}=(V,E,\lambda)$ where $\lambda$ is such that every snapshot of $\mathcal{G}$ is $G$ with "lifespan" $|V|$, and observe that in this case, \textsc{"Temporal Zero Forcing Set@@problem"} falls back in the statement of \textsc{"Zero Forcing Set@@problem"}. Intuitively, this leads us to think that the "temporal problem@@problem" is in a sense harder than its static counterpart. Our work confirms this intuition, and we investigate in detail the newly added layers of complexity that come with this generalization. It can also be observed that given a "temporal graph" $\mathcal{G}=(V,E,\lambda)$ and some set $S\subset V$, it is easy to check whether $S$ is a "temporal zero forcing set" of $\mathcal{G}$, as it is sufficient to emulate the "corruption" process for each snapshot, applying all possible "corruption" at each step. 

\section{Hardness of the Temporal Zero Forcing Set problem}\label{sec:main-hard}

We first give ground-setting results on \textsc{"Temporal Zero Forcing Set@@problem"}, starting with an NP-hardness result in the case where the "underlying graph" is a star. Although the result in Theorem~\ref{th:np-hard} is weaker and, on numerous accounts, implied by later results in this work, we take time to present its arguments, as they lay the groundwork for the following proofs. 
\begin{theorem}\label{th:np-hard}
  \textsc{"Temporal Zero Forcing Set@@problem"} is \texttt{NP}-hard when
  restricted to "temporal graphs" whose "underlying graph" is a star.
\end{theorem}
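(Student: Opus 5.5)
The plan is a polynomial reduction from \textsc{Vertex Cover} (a hitting-type problem) to \textsc{"Temporal Zero Forcing Set@@problem"} on temporal stars. First, I would record the dynamics on a star with center $c$ and leaf set $L$. At snapshot $i$, let $L_i\subseteq L$ be the set of leaves adjacent to $c$. A leaf's only possible neighbor is $c$, so a leaf can only corrupt $c$, and it does so as soon as it is "corrupted" and belongs to some $L_i$. Once $c$ is "corrupted", it "corrupts" a leaf $v$ at time $i$ exactly when $v$ is the unique "non-corrupted" leaf of $L_i$. Up to one unit of budget, we may therefore assume $c\in S$, arguing via the easy exchange that a set without $c$ loses nothing by replacing its first useful leaf with $c$. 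The process then becomes purely combinatorial: scan the sets $L_1,\dots,L_T$ in order, and whenever exactly one element of $L_i$ is uncorrupted, corrupt it.

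Given a graph $H=(X,F)$ and an integer $k$, I would build a temporal star whose leaves are the vertices $x\in X$ together with one edge-leaf $w_e$ for each $e\in F$. The snapshots come in two phases. In the first phase, each edge $e=uv$ gets snapshots $\{u,w_e\}$ and $\{v,w_e\}$, so $w_e$ becomes "corrupted" iff $u$ or $v$ is already "corrupted" at that moment or $w_e\in S$. In the second phase, which comes last, there is one singleton snapshot $\{x\}$ for every $x\in X$, so every vertex-leaf is corrupted for free. No later snapshot contains any $w_e$, so each $w_e$ must be corrupted during the first phase. The target budget is $k+1$, counting the center.

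The intended correspondence is as follows. If $C$ is a vertex cover, then $S=C\cup\{c\}$ works: each $w_e$ sees a corrupted endpoint. Conversely, a solution $S$ should yield a vertex cover of size at most $|S|-1$. Each $w_e\in S$ can be replaced by one endpoint of $e$, and any leaf corrupted during the first phase should trace back, by following who corrupted it, to an initially chosen vertex-leaf covering the relevant edges.

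The main obstacle is unintended propagation in the first phase. A corrupted $w_e$ appearing in $\{v,w_e\}$ will corrupt $v$, which may then corrupt other edge-leaves, so one chosen vertex could cascade along paths of $H$ and cover far more than its incident edges. My first attempt to fix this is to order the phase-one snapshots so that every $\{x,w_e\}$ with $x$ a vertex-leaf appears only while $x$ can act as a source and never as a target. Concretely, I would give each edge two private copies $w_e^u,w_e^v$ joined through a third private leaf in a snapshot $\{w_e^u,w_e^v,a_e\}$, placed after all vertex--edge snapshots. This makes the gadget an OR whose output never flows back into $X$. If this still leaks, I would switch the source problem to \textsc{Multicolored Clique}, where each edge-leaf sits in a triple snapshot $\{u,v,w_e\}$ acting as an AND, and a budget/counting argument forces the chosen vertex-leaves to form a clique. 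The key lemma to verify in either version is that no corruption path ever re-enters a vertex-leaf during the first phase.
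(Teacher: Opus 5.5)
There is a genuine gap: the construction you commit to does not work, and the repair you propose is not an OR gate. You correctly note that the basic version leaks. For each edge $e=uv$, whichever of $\{u,w_e\}$, $\{v,w_e\}$ comes second lets a corrupted first endpoint corrupt the other endpoint through $w_e$.

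Your fix uses the snapshot $\{w_e^u,w_e^v,a_e\}$, which has three leaves. In a star snapshot, the center corrupts a leaf only if it is the \emph{unique} uncorrupted leaf present. Since $a_e$ occurs in no other snapshot, it ends up corrupted only if $a_e\in S$ (one extra unit of budget per edge) or if both $w_e^u$ and $w_e^v$ were already corrupted, i.e.\ both endpoints of $e$ were chosen. That is an AND, so the equivalence with vertex covers fails.

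The Multicolored Clique fallback is only a sketch. It still rests on the unproved no-re-entry lemma. It also has no selection gadget forcing exactly one chosen vertex per colour class. The paper's proof of Theorem~\ref{th:w1} uses for this the snapshots $\{v,v'\}$, followed by all pairs $\{u',v'\}$ within a colour class, plus the leaves $y_{i,j}$.

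Finally, your exchange argument for the center is false as stated. Take two snapshots both equal to $\{\ell,m\}$. The set $\{\ell\}$ is a temporal zero forcing set, but $\{c\}$ is not, because after the swap $\ell$ and $m$ are both uncorrupted whenever they meet $c$. Moreover, a slack of one unit destroys the exact threshold $k+1$. The paper handles this with a private leaf $c'$ that appears alone in snapshot $1$. This forces $c$ or $c'$ into $S$ and makes the two interchangeable.

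The missing idea is an OR gate built from \emph{pair} snapshots between private leaves that can only receive corruption when they meet the selection leaves. In the paper's 3-SAT reduction, each clause leaf $u_{i,j}$ first appears with its literal leaf while still uncorrupted. The cyclic pairs $\{u_{i,j},u_{i,(j \bmod 3)+1}\}$ then corrupt the whole triple iff some literal leaf was corrupted. The variable gadget $\{v_i,v_i'\},\{\bar v_i,\bar v_i'\},\{v_i',\bar v_i'\}$ and the budget $k+1$ (with $k$ the number of variables) force one token per variable. The complementary literal leaf then stays uncorrupted until the final snapshots $\{v_i,\bar v_i\}$.

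Your approach can be rescued the same way:
\begin{itemize}
\item replace $\{w_e^u,w_e^v,a_e\}$ by the single snapshot $\{w_e^u,w_e^v\}$, placed after all snapshots $\{x,w_e^x\}$ and before the final singletons;
\item add the leaf $c'$.
\end{itemize}
Then a vertex leaf $x$ is corrupted before the singleton phase only if $x\in S$ or some $w_e^x\in S$. The latter case can be exchanged for $x$ by monotonicity of the synchronous process. This yields $\mathrm{TZ}=\tau(H)+1$, with $\tau$ the vertex cover number, which is a valid reduction and somewhat simpler than the paper's. As written, though, the proposal does not prove the theorem.
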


Before proceeding, we refer the reader to the introduction of~\cite{TOVEY198485} for the definition of \textsc{$3$-SAT}. We prove Theorem~\ref{th:np-hard} by a simple reduction from \textsc{$3$-SAT}. Since the problem directly lies in \texttt{NP}, we only have to prove the following lemma.

\begin{lemma}\label{lem:np_hard_star}
    Given a \textsc{$3$-SAT} formula $\Phi$ with $k$ variables, there exists a temporal star $\mathcal{G}$ such that $"TZ"(\mathcal{G})= k+1$ if and only if $\Phi$ is satisfiable, and $\mathcal{G}$ can be computed in time polynomial in $|\Phi|$.
\end{lemma}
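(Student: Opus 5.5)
We reduce from \textsc{$3$-SAT}. Let $\Phi$ have variables $x_1,\dots,x_k$ and clauses $C_1,\dots,C_m$. We build a temporal star with centre $c$ and the following leaves: two literal leaves $\vx{i}$ and $\nvx{i}$ for each variable $x_i$, one clause leaf $\C{j}$ for each clause $C_j$, and two gadget leaves $a,b$. Two facts about a star drive the construction. A corrupted leaf whose only edge in $G_t$ goes to an uncorrupted centre corrupts the centre. A corrupted centre corrupts a leaf exactly when that leaf is its unique uncorrupted neighbour in $G_t$.

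The snapshots, in order, are the following.
\begin{enumerate}
\item Times $1,2$: only the edges $ca$ and $cb$. This is a gadget that costs exactly one vertex of the forcing set and corrupts the centre. With $a\in S$, at time $1$ the leaf $a$ corrupts $c$. At time $2$ the centre $c$ corrupts $b$.
\item Clause phase: for each clause $C_j$ and each literal $\ell\in C_j$, one snapshot containing only the edges $c\,\C{j}$ and $c\,\mathrm{leaf}(\ell)$. These snapshots are ordered clause by clause.
\item Final phase: for each $i$, one snapshot containing only $c\vx{i}$ and $c\nvx{i}$.
\end{enumerate}
Every edge lives in polynomially many snapshots, so the construction is polynomial.

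For the forward direction, take a satisfying assignment. Let $S$ consist of $a$ together with the true literal leaf of each variable, so $|S|=k+1$.
\begin{itemize}
\item After time $2$ the centre is corrupted.
\item In the clause phase, for each $C_j$ the snapshot of a true literal of $C_j$ has $\C{j}$ as the unique uncorrupted neighbour of $c$, so $\C{j}$ is corrupted. The other snapshots of $C_j$ may also corrupt false literal leaves once $\C{j}$ is corrupted, which is harmless.
\item In the final phase, each snapshot $\{\vx{i},\nvx{i}\}$ has at most one uncorrupted leaf, and $c$ corrupts it.
\end{itemize}
Hence $S$ is a temporal zero forcing set. For the matching lower bound $"TZ"(\mathcal{G})\ge k+1$, note that one of $a,b,c$ must lie in $S$: in their two snapshots neither of $a,b$ can be corrupted unless $c$ is corrupted and the other one already is.

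For the converse, suppose $|S|=k+1$. I will argue that, after exchanges, $S$ contains one gadget vertex and exactly one literal leaf per variable, and that this choice is a satisfying assignment. The centre is corrupted only through the gadget or through an $S$-leaf. A clause leaf $\C{j}\notin S$ is corrupted only in one of its own snapshots, and only when $c$ is corrupted and the paired literal leaf already is. A literal leaf is corrupted only if it is in $S$, or the clause leaf of its snapshot was already corrupted, or in the final phase.

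\textbf{Main obstacle.} The step I expect to be hardest is ruling out "cheating" forcing sets. One kind puts some $\C{j}$ into $S$ instead of a literal leaf. The corrupted $\C{j}$ then lets $c$ corrupt the false literal leaves occurring in $C_j$ during the clause phase, and these can cascade to later clause leaves. As stated, the cascade may make the counting fail. My first fix is to order the clause phase so that, for every literal leaf, all its appearances as the "source" occur before any snapshot through which it could be corrupted by a clause leaf. If that is not enough, I will replace each literal leaf by many copies (one per occurrence, each paired with the final-phase partner). Then a variable with no $S$-member among its leaves needs every copy corrupted via clause leaves, and an exchange argument shows that swapping a clause leaf in $S$ for a literal leaf never hurts. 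That yields exactly one literal per variable and, by the clause snapshots, a satisfying assignment.
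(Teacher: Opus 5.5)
\textbf{There is a genuine gap: the lower bound fails.} It is exactly the step you flag as the main obstacle. Nothing in your construction forces $k$ vertices of $S$ outside the $\{a,b,c\}$ gadget. Your argument for $\mathrm{TZ}(\mathcal{G})\ge k+1$ only yields one vertex, and the bound is in fact false for your graph.

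Take $\Phi=(x_1\lor x_2\lor x_3)$ and $S=\{a,\C{1}\}$. At time $1$ the leaf $a$ corrupts $c$, and at time $2$ the centre corrupts $b$. In each snapshot $\{\C{1},\vx{i}\}$ the centre's only uncorrupted neighbour is $\vx{i}$, so it corrupts it. In the final phase it then corrupts each $\nvx{i}$. Hence $\mathrm{TZ}(\mathcal{G})\le 2<4=k+1$ although $\Phi$ is satisfiable.

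The same mechanism breaks the converse even for canonical sets. The cascade you call harmless is a clause leaf, corrupted through a true literal, then corrupting the false literal leaves of its clause. It is harmless only in the forward direction. In the converse it can corrupt literal leaves of both polarities during the clause phase, so a set encoding a non-satisfying assignment can still corrupt everything.

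Neither of your fixes repairs this.
\begin{itemize}
\item Your first fix cannot work. Every clause-phase snapshot $\{\C{j},\mathrm{leaf}(\ell)\}$ is symmetric. The snapshot in which the literal leaf is the source for $\C{j}$ is the very snapshot in which an already corrupted $\C{j}$ would corrupt that literal leaf, so no ordering separates the two roles.
\item Your second fix is too underspecified to check. With one copy per occurrence, you need a way to propagate corruption between copies of the same literal; otherwise one literal leaf per variable no longer suffices in the forward direction. The exchange \emph{swap a clause leaf of $S$ for a literal leaf} is also unjustified, since a single clause leaf in $S$ can seed three variables at once.
\end{itemize}

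\textbf{What the paper does instead.} It uses two ingredients your construction lacks.
\begin{itemize}
\item A private per-variable gadget. The leaves $\vx{i}'$ and $\nvx{i}'$ occur only in the three snapshots $\{\vx{i},\vx{i}'\}$, $\{\nvx{i},\nvx{i}'\}$, $\{\vx{i}',\nvx{i}'\}$. These are placed before $\vx{i}$ and $\nvx{i}$ appear anywhere else. So no corruption can happen in them unless one of these four vertices is in $S$. Together with the $\{c,c'\}$ gadget this gives $\mathrm{TZ}(\mathcal{G})\ge k+1$ unconditionally, and at equality it leaves no budget for clause vertices in $S$.
\item Three clause vertices $\C{j,1},\C{j,2},\C{j,3}$ per clause. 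Each appears in exactly one literal snapshot, and all three literal snapshots come before the pair snapshots $\{\C{j,p},\C{j,p'}\}$. A clause vertex not in $S$ cannot be corrupted before its unique literal snapshot, so it never corrupts a literal leaf. Hence, when $|S|=k+1$, $\vx{i}$ is corrupted before the final phase only if $\vx{i}$ or $\vx{i}'$ is in $S$ (and symmetrically for $\nvx{i}$).
\end{itemize}
This is what makes the valuation read off from $S$ well defined and satisfying. The pair snapshots spread corruption within a clause only after its literal snapshots are over.
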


Let us first explain the construction of the "temporal graph" $\mathcal{G}$ from a \textsc{$3$-SAT} formula $\Phi$. A representation of this construction is given in Figure~\ref{fig:star3sat}. 
\begin{figure}[t]
\begin{subfigure}{0.31\textwidth}
        \centering
\scalebox{0.9}{
\begin{tikzpicture}
   \tikzstyle{circlenode}=[draw,circle,minimum size=70pt,inner sep=0pt]
    \tikzstyle{whitenode}=[draw,circle,fill=white,minimum size=18pt,inner sep=0pt]
    \tikzstyle{rednode}=[draw,circle=red,fill=red,minimum size=12pt,inner sep=0pt]
    \tikzstyle{bluenode}=[draw,circle=blue,fill=blue,minimum size=12pt,inner sep=0pt]
    \tikzstyle{greynode}=[draw,circle=purple,fill=black!50,minimum size=12pt,inner sep=0pt]
    \tikzstyle{nonode}=[draw=white,circle=red,fill=white,minimum size=12pt,inner sep=0pt]
 
\draw (0,0) node[whitenode] (a) {$c$};
%\draw (-2,-2) node[whitenode] (b1) {75};
\draw (-2,0) node[whitenode] (b2) {$c'$};
\draw (-1.5,1.5) node[whitenode] (b3) {$\vx{i}$};
\draw (0,-2) node[whitenode] (b4) {$\C{j,2}$};
\draw (0,2) node[whitenode] (b5) {$\nvx{i}$};
\draw (1.5,-1.5) node[whitenode] (b6) {$\C{j,1}$};
\draw (-1.5,-1.5) node[whitenode] (b9) {$\C{j,3}$};
\draw (2,0) node[whitenode] (b7) {$\vx{i}'$};
\draw (1.5,1.5) node[whitenode] (b8) {$\nvx{i}'$};

%\draw (a) edge node {} (b1);
\draw (a) edge node {} (b2);
\draw (a) edge node {} (b3);
\draw (a) edge node {} (b4);
\draw (a) edge node {} (b5);
\draw (a) edge node{} (b6);
\draw (a) edge node {} (b7);
\draw (a) edge node {} (b8);
\draw (a) edge node {} (b9);
%\draw (0.5,-1.5) edge [bend left, dotted] node {} (-1.5,-0.5);
\draw (-1.2,0.9) edge [bend right, dotted] node {} (-1.5,0.2);
\draw (1.2,-0.9) edge [bend right, dotted] node {} (1.5,-0.2);
\draw (-1.2,-0.9) edge [bend left, dotted] node {} (-1.5,-0.2);

\end{tikzpicture}
}
\caption{"Underlying graph" of $\mathcal{G}$}
        \end{subfigure}
        ~
\begin{subfigure}{0.31\textwidth}
        \centering
\scalebox{0.9}{
\begin{tikzpicture}
   \tikzstyle{circlenode}=[draw,circle,minimum size=70pt,inner sep=0pt]
    \tikzstyle{whitenode}=[draw,circle,fill=white,minimum size=18pt,inner sep=0pt]
    \tikzstyle{rednode}=[draw,circle=red,fill=red,minimum size=12pt,inner sep=0pt]
    \tikzstyle{bluenode}=[draw,circle=blue,fill=blue,minimum size=12pt,inner sep=0pt]
    \tikzstyle{greynode}=[draw,circle=purple,fill=black!50,minimum size=12pt,inner sep=0pt]
    \tikzstyle{nonode}=[draw=white,circle=red,fill=white,minimum size=12pt,inner sep=0pt]
 
\draw (0,0) node[whitenode] (a) {$c$};
%\draw (-2,-2) node[whitenode] (b1) {75};
\draw (-2,0) node[whitenode] (b2) {$c'$};
\draw (-1.5,1.5) node[whitenode] (b3) {$\vx{i}$};
\draw (0,-2) node[whitenode] (b4) {$\C{j,2}$};
\draw (0,2) node[whitenode] (b5) {$\nvx{i}$};
\draw (1.5,-1.5) node[whitenode] (b6) {$\C{j,1}$};
\draw (-1.5,-1.5) node[whitenode] (b9) {$\C{j,3}$};
\draw (2,0) node[whitenode] (b7) {$\vx{i}'$};
\draw (1.5,1.5) node[whitenode] (b8) {$\nvx{i}'$};

%\draw (a) edge node {} (b1);
\draw (a) edge node {} (b2);

\end{tikzpicture}
}
\caption{$G_1=(V,E_1)$}
        \end{subfigure}
        ~
\begin{subfigure}{0.31\textwidth}
        \centering
\scalebox{0.9}{
\begin{tikzpicture}
   \tikzstyle{circlenode}=[draw,circle,minimum size=70pt,inner sep=0pt]
    \tikzstyle{whitenode}=[draw,circle,fill=white,minimum size=18pt,inner sep=0pt]
    \tikzstyle{rednode}=[draw,circle=red,fill=red,minimum size=12pt,inner sep=0pt]
    \tikzstyle{bluenode}=[draw,circle=blue,fill=blue,minimum size=12pt,inner sep=0pt]
    \tikzstyle{greynode}=[draw,circle=purple,fill=black!50,minimum size=12pt,inner sep=0pt]
    \tikzstyle{nonode}=[draw=white,circle=red,fill=white,minimum size=12pt,inner sep=0pt]
 
\draw (0,0) node[whitenode] (a) {$c$};
%\draw (-2,-2) node[whitenode] (b1) {75};
\draw (-2,0) node[whitenode] (b2) {$c'$};
\draw (-1.5,1.5) node[whitenode] (b3) {$\vx{1}$};
\draw (0,-2) node[whitenode] (b4) {$\C{j,2}$};
\draw (0,2) node[whitenode] (b5) {$\nvx{1}$};
\draw (1.5,-1.5) node[whitenode] (b6) {$\C{j,1}$};
\draw (-1.5,-1.5) node[whitenode] (b9) {$\C{j,3}$};
\draw (2,0) node[whitenode] (b7) {$\vx{1}'$};
\draw (1.5,1.5) node[whitenode] (b8) {$\nvx{1}'$};

%\draw (a) edge node {} (b1);
\draw (a) edge node {} (b3);
\draw (a) edge node {} (b7);
%\draw (0.5,-1.5) edge [bend left, dotted] node {} (-1.5,-0.5);

\end{tikzpicture}
}
\caption{$G_2=(V,E_2)$}
        \end{subfigure}
\caption{The temporal star presented in the proof of Theorem~\ref{th:np-hard}, with its static expansion and two first snapshots}\label{fig:star3sat}
\end{figure}

    Consider a \textsc{$3$-SAT} formula $\Phi$ in its conjunctive normal form with exactly $3$ literals per clause. Let $(C_i)_{i\in[1,m]}$ be its clauses. For $i\in[1,m]$, let $(l_{i,j})_{j\in \{1,2,3\}}$ be the literals of $C_i$. %We have $\displaystyle \Phi=\bigvee_{i=1}^m C_i=\bigvee_{i=1}^m \bigwedge_{j=1}^3 l_{i,j}$. 
    Let $k$ be the number of variables used in the formula $\Phi$. For each $(i,j)\in [1,n]\times \{1,2,3\}$, there exists $l \in [1,k]$ such that $l_{i,j}=\x{l}$ or $l_{i,j}=\nx{l}$.

    %Let us construct a temporal star $\mathcal{G}$ such that $"TZ"(\mathcal{G})= k+1$ if and only if $\Phi$ is satisfiable. Figure~\ref{fig:star3sat} illustrates the "underlying graph" of $\mathcal{G}$, as well as its first two snapshots. 
    We start the construction of $\mathcal{G}$ with a single vertex $c$, which we call the \emph{center}. All the other vertices are adjacent to $c$. They are:
\begin{itemize}
\item a vertex $c'$;
\item for each variable $\x{i}$ (with $i \in [1, k]$), four vertices $\vx{i}$, $\nvx{i}$, $\vx{i}'$, and $\nvx{i}'$;
\item for each clause $C_i$ (with $i \in [1,m]$), three vertices $\C{i,1}$, $\C{i,2}$, and $\C{i,3}$.
\end{itemize}

Since $\mathcal{G}$ is a star with center $c$, all other vertices are adjacent to $c$. To describe $\lambda$, we actually describe each snapshot. As all edges are connected to $c$, we describe a snapshot by giving the set of vertices whose (only) incident edge is present at that time:
%, as well as some intuition behind the choice of this set:

\begin{enumerate}
\item snapshot $1$ - $\{c'\}$;
\item snapshots $2$ to $3k+1$ - For each $i\in[1,k]$, we use three consecutive snapshots:
\begin{itemize}
\item at $3i-1$: $\{\vx{i},\vx{i}'\}$;
\item at $3i$: $\{\nvx{i},\nvx{i}'\}$;
\item at $3i+1$: $\{\vx{i}',\nvx{i}'\}$.
\end{itemize}

\item Snapshots $3k+2$ to $3k + 6m + 1$- For each $i\in[1,m]$, we use six consecutive snapshots:
\begin{itemize}
\item at $3k-2 +6i +j$ for each $1\leq j\leq 3$, and $1\leq i\leq m$, if $l_{i,j}=\x{t}$ (resp. $\nx{t}$) for some $t$, we have the set $\{\C{i,j},\vx{t}\}$, (resp. the set $\{\C{i,j},\nvx{t}\}$). 
\item at $3k+1+6i+j$ for each $1\leq j\leq 3$, we have $\{\C{i,j},\C{i,(j \mod 3)+1}\}$. 
\end{itemize}
\item Snapshots $3k + 6m + 2$ to $4k + 6m + 1$ - For each $i\in[1,k]$, snapshot $3k + 6m + 1 + i$ consists in the set $\{\vx{i},\nvx{i}\}$.
\end{enumerate}

While technically $\mathcal{G}$ should be expressed as a function of $\Phi$, we make the choice of simplifying the notation to make the proof easier to read.
\begin{lemma}
  $\mathcal{G}$ has a size polynomial in $|\Phi|$.
\end{lemma}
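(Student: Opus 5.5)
We bound the three components of $\mathcal{G}=(V,E,\lambda)$ separately, in terms of the number $k$ of variables and the number $m$ of clauses of $\Phi$. Both are at most $|\Phi|$: every clause contributes three literals to the encoding, and we may assume every variable occurs in some clause. We take the size of $\mathcal{G}$ to be $|V|+|E|+\sum_{e\in E}|\lambda(e)|$, with each timestamp written in binary or unary.

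\textbf{Vertices and edges.} We count directly from the construction. There is the center $c$, the vertex $c'$, four vertices per variable and three per clause, so $|V| = 2+4k+3m$. Since the underlying graph is a star centered at $c$, we get $|E| = |V|-1 = 1+4k+3m$. Both are linear in $|\Phi|$.

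\textbf{Timestamps.} The lifespan is read off from the last block of snapshots: $T = 4k+6m+1$. Every snapshot, as listed in items 1 to 4, contains at most two edges. Hence
\[
\sum_{e\in E}|\lambda(e)| \;=\; \sum_{t=1}^{T}|E_t| \;\le\; 2T \;=\; 8k+12m+2.
\]
Each timestamp is an integer at most $T$, so writing it takes $O(\log T)$ bits, or $O(T)$ symbols in unary. Either way the total encoding size is $O((k+m)^2)$, which is polynomial in $|\Phi|$.

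\textbf{Construction time.} We build the vertex set, then scan the clauses once to emit the snapshots. For each literal, this only requires identifying its variable $\x{t}$ and its sign. This takes time polynomial in $|\Phi|$, which also gives the computability claim of Lemma~\ref{lem:np_hard_star}.

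The only step needing care is the indexing of the clause snapshots. We must check that the timestamps $3k-2+6i+j$ and $3k+1+6i+j$, for $i\in[1,m]$ and $j\in\{1,2,3\}$, range over exactly $[3k+5,\,3k+6m+4]$. The stated interval $[3k+2,\,3k+6m+1]$ is shifted by $3$ relative to this, so the clause block overlaps the first three snapshots of the final block (which starts at $3k+6m+2$). This shifts $T$ by a constant, and snapshots that share a timestamp contain at most four edges instead of two. Neither effect changes the polynomial bound.
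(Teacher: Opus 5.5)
Your proof is correct and follows the same route as the paper, which simply observes that $\mathcal{G}$ is a sequence of $4k+6m+1$ snapshots, each of polynomial size. Your version is more careful than the paper's one-liner: your count $|V|=2+4k+3m$ is the right one (the paper's ``at most $4k+2$ vertices'' per snapshot omits the $3m$ clause vertices), and you are right that the clause timestamps as written are shifted by $3$, though, as you note, this does not affect the polynomial bound.
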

\begin{proof}
  The "temporal graph" $\mathcal{G}$ corresponds to the sequence of its snapshots, where each snapshot is a graph on at most $4k+2$ vertices, and there are $4k + 6m + 1$ snapshots.
\end{proof}
The main tool of this proof is that a vertex can only become "corrupted" when its incident edge appears in a snapshot.

    \begin{lemma}\label{cl:3sat-c}
        Any "temporal zero forcing set" of $\mathcal{G}$ contains either $c$ or $c'$ and, for each $i\in[1,k]$, at least one vertex among $\{\vx{i},\vx{i}',\nvx{i},\nvx{i}'\}$.
    \end{lemma}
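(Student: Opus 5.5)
Both parts follow from the observation above: a leaf can only be corrupted at a snapshot where its edge to $c$ is present, and then only if the other endpoint of that edge is corrupted and has a single non-corrupted neighbour in that snapshot. Since a leaf has degree at most one in any snapshot, we also use the converse direction: a leaf can corrupt $c$ at a snapshot where the leaf is corrupted, its edge is present, and $c$ is not yet corrupted. Throughout, we fix a "temporal zero forcing set" $S=\texttt{corr}(0)$ with its "corrupting function" $\texttt{corr}$, and argue by contradiction, following the snapshots in order.

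\textbf{Claim on $c$ and $c'$.} The edge $cc'$ appears only in snapshot $1$. If neither $c$ nor $c'$ is in $S$, then at snapshot $1$ neither endpoint is in $\texttt{corr}(0)$. So $c'$ cannot be corrupted at time $1$, and no later snapshot contains its edge. Hence $c'\notin\texttt{corr}(T)$, contradicting $\texttt{corr}(T)=V$.

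\textbf{Claim on each variable gadget.} Fix $i$ and suppose $S\cap\{\vx{i},\vx{i}',\nvx{i},\nvx{i}'\}=\emptyset$. The vertex $\vx{i}'$ only has its edge at snapshots $3i-1$ and $3i+1$, and $\nvx{i}'$ only at $3i$ and $3i+1$. I would check these three snapshots in turn.
\begin{itemize}
\item At snapshot $3i-1$, $c$ has two neighbours, $\vx{i}$ and $\vx{i}'$, and neither is corrupted (none has had an edge before, and none is in $S$). So $c$ corrupts nothing, and the leaves are uncorrupted so they cannot corrupt $c$ either.
\item At snapshot $3i$ the same argument applies to $\nvx{i}$ and $\nvx{i}'$.
\item At snapshot $3i+1$, $c$'s neighbours are $\vx{i}'$ and $\nvx{i}'$, both still uncorrupted, so again nothing happens.
\end{itemize}
After time $3i+1$, $\vx{i}'$ never reappears, so it is never corrupted, a contradiction.

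\textbf{Main obstacle.} The one point needing care is justifying that $\vx{i},\nvx{i},\vx{i}',\nvx{i}'$ are uncorrupted before snapshot $3i-1$. This holds because each of their edges first appears at time $3i-1$ or later (all clause and final snapshots come after $3k+1$), and an isolated vertex can be neither corrupted nor used to corrupt.

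In each case two uncorrupted neighbours block $c$, regardless of whether $c$ itself is corrupted, so the claim does not depend on the first part.
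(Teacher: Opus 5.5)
Your proof is correct and follows the paper's argument. The pair $c,c'$ is handled by noting that $cc'$ is the only edge of snapshot $1$ and never reappears. The gadget is handled by noting that if $S\cap\{\vx{i},\vx{i}',\nvx{i},\nvx{i}'\}=\emptyset$, then no corruption can occur in snapshots $3i-1,3i,3i+1$, so $\vx{i}'$ stays uncorrupted forever. Your snapshot-by-snapshot check simply spells out what the paper leaves implicit, notably that these four vertices are isolated, hence uncorrupted, before time $3i-1$.
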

    \begin{proof}
Snapshot $1$ is the only one where $c'$ appears. Since in this snapshot, $G=(V, E_1)$ contains exactly one edge, the edge $cc'$, this means that for $c'$ to be "corrupted" afterward, any "temporal zero forcing set" $S$ of $\mathcal{G}$ contains at least one of $c$ or $c'$.
%We can use a similar argument for the vertices 
We can use a slightly less direct version of the same argument applied to the $\vx{i}'$ and the $\nvx{i}'$. Indeed, observe that for any $i\in [1, k]$, the edges incident to either one of $\vx{i}'$ or $\nvx{i}'$ only occur between snapshots $3i-1$ and $3i+1$.
 %Furthermore, no edge incident to any of $\vx{i},\vx{i}',\nvx{i}$, or $\nvx{i}'$ is in any snapshot before snapshot $3i-1$. This means that for $\vx{i}'$ and $\nvx{i}'$ to be "corrupted" by snapshot $3i+2$ (and any later time), at most one of $\vx{i}'$ or $\nvx{i}'$ must have been the only "non-corrupted" neighbor of $c$ at the beginning of snapshot $3i+1$. Assume w.l.o.g. that it is $\vx{i}'$, and that $\vx{i}' \notin S$ (as otherwise we are done). Then by the same argument, $\vx{i}'$ must have been the only "non-corrupted" neighbor of $c$ in a previous snapshot. Since the only previous snapshot where the edge $c\vx{i}'$ appears is snapshot $3i-1$, we conclude that $\vx{i}\in S$. The same argument applies for $\nvx{i}'$. 
 Suppose by contradiction that $S \cap \{\vx{i},\vx{i}',\nvx{i},\nvx{i}'\} = \emptyset$. Then, there is no "corruption" in the snapshots $3i-1, 3i, 3i+1$ and thus $\vx{i}'$ and $\nvx{i}'$ are "not corrupted" at the end of snapshot $3i+1$. This contradicts the fact that these 3 snapshots are the only ones where $\vx{i}'$ and $\nvx{i}'$ appear.
    \end{proof}

    \begin{corollary}\label{col:goodset}
          If $\mathcal{G}$ admits $S$ a "temporal zero forcing set" of size $k+1$, then $S$ contains exactly one vertex in $\{c, c'\}$, and exactly one vertex in each $\{\vx{i},\vx{i}',\nvx{i},\nvx{i}'\}$ for $i\in [1, k]$.
    \end{corollary}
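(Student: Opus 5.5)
The corollary follows from Lemma~\ref{cl:3sat-c} by a pigeonhole argument, and the plan is to use the disjointness of the relevant vertex sets. Consider the $k+1$ sets $A_0=\{c,c'\}$ and $A_i=\{\vx{i},\vx{i}',\nvx{i},\nvx{i}'\}$ for $i\in[1,k]$. By construction of $\mathcal{G}$ these sets are pairwise disjoint: the vertices $c$, $c'$, the four variable vertices of each $\x{i}$, and the clause vertices are all distinct vertices introduced separately.

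Let $S$ be a "temporal zero forcing set" of $\mathcal{G}$ with $|S|=k+1$. By Lemma~\ref{cl:3sat-c}, $S$ meets every $A_i$ for $i\in[0,k]$, so $|S\cap A_i|\geq 1$ for each $i$. Since the $A_i$ are pairwise disjoint,
\[
k+1=|S| \;\geq\; \sum_{i=0}^{k}|S\cap A_i| \;\geq\; k+1 .
\]
Hence both inequalities are equalities. The first equality forces $S\subseteq \bigcup_{i=0}^k A_i$; in particular $S$ contains no clause vertex $\C{j,\ell}$. The second equality forces $|S\cap A_i|=1$ for every $i$, which is exactly the claim.

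There is no real obstacle here. The only point to check is that the sets $A_i$ are disjoint, which is immediate from the list of vertices in the construction; everything else is the counting above.
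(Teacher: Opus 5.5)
Your proof is correct and is the paper's argument: the paper just invokes the pigeonhole principle on Lemma~\ref{cl:3sat-c}, and you spell out the disjointness of the $k+1$ sets and the counting. You also note that equality forces $S$ to contain no clause vertex, which the paper later relies on in the proof of Lemma~\ref{lem:np_hard_star}.
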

    \begin{proof}
      This is a simple pigeonhole principle applied to Lemma~\ref{cl:3sat-c}.
    \end{proof}
We are now ready to prove Lemma~\ref{lem:np_hard_star}.
\begin{proof}
We first prove the first direction, and assume there exists a valuation $\sigma$ that satisfies $\Phi$. Recall that $k$ is the number of variables in $\Phi$. We build $S$, a "temporal zero forcing set" of $\mathcal(G)$, of size $k+1$ consisting of the vertex $c$, and for every $i\in [1, k]$ if the variable $x_i$ is set to True by $\sigma$, then we add $\vx{i}$, and add $\nvx{i}$ otherwise. We prove that at the end of snapshot $4k + 6m + 1$, every vertex of $\mathcal(G)$ has been successfully "corrupted".
There are several cases:

\begin{itemize}
\item the vertex $c'$ gets "corrupted" in snapshot $1$ since $c\in S$.
\item For every $i\in[1,k]$, both $\vx{i}'$ and $\nvx{i}'$ are "corrupted" at the end of snapshot $3i+1$. Indeed, note that at least one of them is "corrupted" at the beginning of snapshot $3i+1$, as either $\vx{i}$ or $\nvx{i}$ is in $S$. 
\item For all $i\in[1,m]$, we claim that at least one of $\C{i,1}$, $\C{i,2},\C{i,3}$ is "corrupted" at the end of snapshot $3k+1+6i$, as one of the three variables (or its negation) in $C_i$, say $l_{i, 1}$, is set to True (or False for negations), hence its corresponding vertex in $\mathcal{G}$ is in $S$, which thus "corrupts" the vertex $\C{i, 1}$. We are now left to prove that $\C{i, 2}$ and $\C{i, 3}$ are indeed "corrupted" by the end of snapshot $3k+3+6i$. Observe that snapshots $3k+1+6i$ to $3k+3+6i$ enumerate the pairs in $\{\C{i, 1}, \C{i, 2},\C{i, 3}\}$, which concludes the case.
\item Finally, for $i\in[1,k]$, it is direct to observe that both $\vx{i}$ and $\nvx{i}$ are "corrupted" at the end of snapshot $3k + 6m + 4$, as either $\vx{i}$ or $\nvx{i}$ is in $S$.
\end{itemize}
Hence $S$ is a "temporal zero forcing set" of $\mathcal{G}$.

We now prove the reverse direction. Assume there exists a "temporal zero forcing set" $S$ of $\mathcal(G)$ of size $k+1$. We prove that $\Phi$ must be satisfiable. By Corollary~\ref{col:goodset}, $S$ must contain exactly one vertex in $\{c, c'\}$, and exactly one vertex in each $\{\vx{i},\vx{i}',\nvx{i},\nvx{i}'\}$ for $i\in [1, k]$. We define the valuation $\sigma$ to be the valuation obtained by setting $x_i$ to True if either one of $\vx{i},\vx{i}'$ is in $S$, and False otherwise, and claim that $\sigma$ satisfies $\Phi$. Recall that $S$ is a "temporal zero forcing set", that is, at the end of snapshot $4k + 6m + 4$, all the vertices of $\mathcal(G)$ are "corrupted". In particular, for every $i\in [1, m]$, all three vertices of $\{\C{i, 1}, \C{i, 2},\C{i, 3}\}$ have been "corrupted". By Corollary~\ref{col:goodset}, none of those three vertices is in $S$, which implies that at least one of them was "corrupted" before the end of snapshot $3k+3+6i$. Assume w.l.o.g. that this vertex is $\C{i, 1}$ and let $\x{j}$ (with $j\in [1, k]$) be the first literal of $C_{i}$. The only snapshot in which the edge $c\C{i, 1}$ appears in $\mathcal{G}$ before snapshot $3k+3+6i$ is at snapshot $3k+1+6i$, where the only two edges are $c\C{i, 1}$ and $c\vx{j}$. This means that $\vx{j}$ must have been "corrupted" at this point. With a similar argument, either $\vx{j}\in S$ and we are done, or it must be that $\vx{j}$ was "corrupted" at an earlier step, which thus is at snapshot $3j-1$ by $\vx{j}'$. Either way, $\x{j} = l_{i, 1}$ is True in $\sigma$, so the clause $C_j$ is satisfied, which concludes the proof. 
\end{proof}

We are now following with much stronger results, starting by proving the hardness of \textsc{"Temporal Zero Forcing Set@@problem"} when three parameters are bounded: the "lifespan" of the input "temporal graph", the maximum degree of its "underlying graph", and the maximum degree of every snapshot:

\begin{theorem}\label{th:np-bounded-all}
  \textsc{"Temporal Zero Forcing Set@@problem"} is \texttt{NP}-hard when
  restricted to "temporal graphs" whose "lifespan" is at most $10$ and whose "underlying graph" has maximum degree $6$ and each snapshot has maximum degree $1$.
\end{theorem}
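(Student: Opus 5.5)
The key observation is that when every snapshot is a matching, the corruption rule becomes very simple. At time $i$, a corrupted vertex $v'$ that has an edge $v'v$ in $G_i$ has exactly one neighbour, so it corrupts $v$ whenever $v$ is not yet corrupted. Hence $S$ is a "temporal zero forcing set" if and only if every vertex is reachable from $S$ by a journey: a path whose edges appear at strictly increasing times. So it suffices to give a reduction to this reachability version, using matchings as snapshots, lifespan $10$, and underlying degree at most $6$.

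I would reduce from the version of \textsc{$3$-SAT} in which clauses have two or three literals and every variable occurs at most three times. This version is NP-complete by Tovey~\cite{TOVEY198485}. Without loss of generality, each variable occurs at most twice positively and at most twice negatively; a variable with only one polarity can be fixed and removed. The construction imitates the one of Lemma~\ref{lem:np_hard_star}, but spreads it over matchings.

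\textbf{Variable gadget.} For each variable $x_i$, create vertices $\vx{i},\nvx{i},\vx{i}',\nvx{i}'$ with the following edges:
\begin{itemize}
\item $\vx{i}\vx{i}'$ at time $1$;
\item $\nvx{i}\nvx{i}'$ at time $2$;
\item $\vx{i}'\nvx{i}'$ at time $3$;
\item $\vx{i}\nvx{i}$ at time $10$.
\end{itemize}

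\textbf{Clause gadget.} For each clause $C_j$, create one vertex $\C{j,t}$ per literal $l_{j,t}$. Each $\C{j,t}$ is joined to the vertex of its literal, $\vx{}$ or $\nvx{}$, at a time in $\{4,5,6\}$. These times are chosen by a proper edge colouring of the bipartite graph between literal vertices and occurrence vertices. In this graph every occurrence vertex has degree $1$ and every literal vertex has degree at most $2$, so three colours suffice, and each of snapshots $4,5,6$ is a matching. The clause vertices are then connected among themselves:
\begin{itemize}
\item for a $3$-clause, $\C{j,1}\C{j,2}$ at times $7$ and $10$, $\C{j,2}\C{j,3}$ at time $8$, and $\C{j,3}\C{j,1}$ at time $9$;
\item for a $2$-clause, $\C{j,1}\C{j,2}$ at times $7$ and $8$.
\end{itemize}
One checks that from any single $\C{j,t}$ the other clause vertices are reached by time $10$. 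For example, from $\C{j,3}$ one reaches $\C{j,1}$ at time $9$ and then $\C{j,2}$ at time $10$.

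All snapshots are matchings and the lifespan is $10$. Literal vertices have underlying degree at most $1+1+2\le 4$, the primed vertices have degree $2$, and clause vertices have degree at most $3$. So the degree bounds hold, with room to spare for the bound $6$.

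\textbf{Correctness.} I would show that $"TZ"(\mathcal{G})\le k$ if and only if $\Phi$ is satisfiable, where $k$ is the number of variables.

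For the lower bound, argue as in Lemma~\ref{cl:3sat-c}: $\vx{i}'$ and $\nvx{i}'$ are incident only to edges of their own gadget, at times $1,2,3$. So if $S$ misses all four vertices of the gadget, nothing in the gadget is corrupted by time $3$, and $\vx{i}'$ and $\nvx{i}'$ are never corrupted. Hence $|S|\ge k$, and when $|S|=k$, $S$ contains exactly one gadget vertex per variable and no clause vertex.

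For the forward direction, given a satisfying assignment, take $\vx{i}$ or $\nvx{i}$ according to the value of $x_i$. The chosen literal corrupts the primed pair by time $3$. It then corrupts its clause occurrences during times $4$–$6$. The clause gadgets complete by time $10$, and the unchosen literal vertex is corrupted at time $10$ through $\vx{i}\nvx{i}$.

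For the reverse direction, set $x_i$ true iff $S\cap\{\vx{i},\vx{i}'\}\neq\emptyset$. The main check, and the step where I expect the difficulty, is to show that the literal vertex on the side not containing the source is not corrupted before time $10$. I would do this by listing journeys. From $\nvx{i}$, the only journey is to $\nvx{i}'$ at time $2$, then $\vx{i}'$ at time $3$; the edge $\vx{i}'\vx{i}$ exists only at time $1$, so $\vx{i}$ cannot be reached from the gadget before time $10$. It also cannot be reached through clause vertices before time $10$: clause vertices are corrupted only during times $4$–$10$, and their edges to literal vertices appear only at times $4$–$6$. Any clause vertex corrupted at time $t\in\{4,5,6\}$ was corrupted through its own literal edge at time $t$, so it has no later opportunity to reach another literal vertex before time $7$. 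The same holds symmetrically from $\nvx{i}'$, and from $\vx{i}$ or $\vx{i}'$ towards $\nvx{i}$.

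Consequently, every clause, since it has some corrupted vertex, must have received the corruption at a time in $\{4,5,6\}$ from a literal vertex on the source side of its gadget. That literal is true under the assignment, so the assignment satisfies $\Phi$.
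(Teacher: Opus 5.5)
Your proof is correct and follows the paper's approach. The paper also turns the star construction of Lemma~\ref{lem:np_hard_star} into matching snapshots: the same gadget $\vx{i},\nvx{i},\vx{i}',\nvx{i}'$ with the primed edges first, occurrence edges in the middle snapshots, a cyclic clause gadget, and $\vx{i}\nvx{i}$ at time $10$. The differences are minor. The paper reduces from Tovey's exactly-3-literal variant with at most four occurrences per variable, which is where its degree bound $6$ comes from, and it schedules occurrence edges by occurrence index rather than by edge colouring. Your at-most-three-occurrence variant gives degree at most $4$, and your journey reformulation writes out the reverse direction that the paper only asserts carries over from the star case.
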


The proof of this result, which uses a similar construction as the one to prove Theorem~\ref{th:np-hard}, is mostly done through the following lemma.

\begin{lemma}\label{lemma:main-np-bounded-all}
    Given a \textsc{$3$-SAT} formula $\Phi$ with $k$ variables where each variable appears at most 4 times, there exists a "temporal graph" $\mathcal{G}$ whose "lifespan" is at most $10$, whose "underlying graph" has maximum degree $6$, in which each snapshot has maximum degree $1$, such that $"TZ"(\mathcal{G})= k$ if and only if $\Phi$ is satisfiable.
\end{lemma}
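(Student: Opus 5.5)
The plan is to reduce from the restricted \textsc{$3$-SAT} in the statement, with a construction in the spirit of the proof of Lemma~\ref{lem:np_hard_star}. The key simplification is that every snapshot is a matching. In that case a vertex $v'$ with an incident edge $v'v$ in $G_i$ has $v$ as its unique neighbour, so it corrupts $v$ exactly when $v'\in\texttt{corr}(i-1)$. Hence $S$ is a "temporal zero forcing set" if and only if every vertex is reachable from $S$ by a strictly time-increasing journey. The whole proof then becomes a reachability argument with tight control of timestamps.

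\textbf{Construction.} For each variable $\x{i}$, I create a variable gadget with:
\begin{itemize}
\item a positive literal vertex $\vx{i}$ and a negative literal vertex $\nvx{i}$;
\item a private vertex $a_i$;
\item for each occurrence of $\x{i}$ (resp.\ $\nx{i}$), one occurrence vertex attached to $\vx{i}$ (resp.\ $\nvx{i}$). There are at most $4$ occurrences per variable.
\end{itemize}
For each clause $C_j$, I create one clause vertex $\C{j}$. The intended timestamps are as follows.
\begin{enumerate}
\item Time $1$: $a_i\vx{i}$. Time $2$: $a_i\nvx{i}$.
\item Times $1$ to $4$: each literal vertex is joined to its occurrence vertices, one per time. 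The schedule must avoid the edge $a_i\vx{i}$ at time $1$ and $a_i\nvx{i}$ at time $2$. Since each literal needs only its own occurrences and $a_i$ uses the two literals at different times, a suitable assignment inside $[1,5]$ should exist. If necessary, I shift the literal-to-occurrence edges to $[2,5]$ and push the later steps below by one.
\item Time $5$ (or $6$): $\vx{i}\nvx{i}$.
\item Three consecutive times (say $6,7,8$): each clause vertex $\C{j}$ is joined to its three occurrence vertices, one per time. Each occurrence vertex gets exactly one clause edge, so every snapshot stays a matching.
\item Final times (up to $10$): repair edges from the literal vertices to their occurrence vertices, using a doubling scheme ($\vx{i}\to o_1$, then $\vx{i}\to o_2$ and $o_1\to o_3$, and so on). This makes every occurrence vertex eventually corrupted once both literals are.
\end{enumerate}
Degrees stay bounded: a literal vertex sees $a_i$, its opposite literal, and its at most $4$ occurrences, for at most $6$ neighbours. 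Occurrence, private and clause vertices have constant degree.

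\textbf{Forward direction.} Given a satisfying valuation, take $S$ to contain the true literal vertex of each variable, which gives $k$ vertices. That literal reaches $a_i$, its occurrences by time $4$, and the opposite literal at time $5$. Every clause has a true literal whose occurrence vertex is corrupted before the clause times, so every $\C{j}$ is corrupted. The repair edges then finish the false literal's occurrence vertices, and the other private vertices are already covered.

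\textbf{Backward direction (the main obstacle).} This is where the lower bound $"TZ"\ge k$ and the extraction of a valuation both need careful checking of the timestamps.

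First, every gadget must contain a vertex of $S$. Before time $5$ the only edges inside a gadget are within it. So $a_i$, whose edges occur only at times $1$ and $2$, can be corrupted only if some vertex of its own gadget is in $S$. With $|S|=k$, each gadget therefore contains exactly one vertex of $S$ and no clause vertex is in $S$.

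Next, normalize $S$ by exchange arguments:
\begin{itemize}
\item A source at $a_i$ can be replaced by $\vx{i}$. The journey from $\vx{i}$ reaches $a_i$ at time $1$ and $\nvx{i}$ at time $2$, so it dominates what $a_i$ achieved.
\item A source at an occurrence vertex $o$ of literal $\ell$ can be replaced by $\ell$. Vertex $o$ has no useful edge before $\ell$ reaches it (at a time $\le4$), and all later transmissions of $o$ are preserved.
\end{itemize}
After normalization, set the chosen literal of each gadget to true. The non-chosen literal is not corrupted before time $5$, so its occurrence vertices are not corrupted before the clause times. Since no clause vertex is a source, each clause vertex must be corrupted through an occurrence vertex of a true literal, so the valuation satisfies $\Phi$.

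The delicate part is making the timestamps actually fit into $10$ snapshots while respecting the matching condition and keeping the exchange arguments valid. I would first fix a concrete table of times and then re-verify both exchange steps against it.
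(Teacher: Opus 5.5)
\textbf{There is a genuine gap: your variable gadget lets the positive literal reach the negative literal before the occurrence phase.}

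In a matching snapshot, a corrupted vertex always corrupts its partner. With $a_iv_i$ at time~$1$ and $a_i\overline{v_i}$ at time~$2$, a source at $v_i$ corrupts $a_i$ at time~$1$, and $a_i$ then corrupts $\overline{v_i}$ at time~$2$. You use exactly this journey in your exchange step for $a_i$. So the claim ``the non-chosen literal is not corrupted before time $5$'' is false whenever the chosen literal is positive. In that case $\overline{v_i}$ passes corruption along its occurrence edges scheduled after time~$2$, which is all of them except possibly one at time~$1$.

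Take $S=\{v_1,\dots,v_k\}$. It corrupts every positive occurrence vertex and almost every negative one before the clause phase. Clause vertices of all-negative clauses then get corrupted even though the valuation you extract (all true) falsifies them, so your backward direction does not go through.

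Retiming $a_i$ does not repair this. Matching snapshots force its two edges to distinct times, and the literal on the earlier edge then reaches the other literal through $a_i$. If instead you push the later edge past the clause phase, your argument that $a_i$ forces a source inside its own gadget no longer works.

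The paper uses two private vertices. Edges $v_iv_i'$ and $\overline{v_i}\,\overline{v_i}'$ appear in snapshot~$1$, and $v_i'\overline{v_i}'$ in snapshot~$2$. Any single source in $\{v_i,v_i',\overline{v_i},\overline{v_i}'\}$ corrupts both primed vertices by time~$2$. Reaching the opposite literal, however, would require using a time-$1$ edge after the time-$2$ edge, so the opposite literal is corrupted only by the edge $v_i\overline{v_i}$ in the last snapshot. No exchange argument is needed: set $x_i$ true iff $v_i$ or $v_i'$ is in $S$.

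\textbf{Separately, your timetable is not yet consistent.} A clause star whose three edges appear at three different times does not spread back to occurrence vertices whose clause edge precedes the corruption of $u_j$, so you need repair edges. Each repair must come after that occurrence vertex's clause edge, otherwise the false literal (corrupted at time~$5$) would reach the clause. A literal can have four occurrences, but two doubling steps at times $9,10$ reach only three new vertices. If you shift to $[2,5]$, only time $10$ remains for repairs.

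The paper avoids all of this by merging occurrence and clause vertices. Clause $C_j$ gets vertices $u_{j,1},u_{j,2},u_{j,3}$. The $p$-th one is joined to its literal vertex in snapshot $2+l$ when it is the $l$-th occurrence of that variable, $l\in[1,4]$. The triangle edges $u_{j,1}u_{j,2}$, $u_{j,1}u_{j,3}$, $u_{j,2}u_{j,3}$ in snapshots $7,8,9$ then spread corruption from any one of them to all three. Snapshot $10$ ($v_i\overline{v_i}$) corrupts the false literals, and nothing is left to repair. This gives lifespan $10$ and maximum degree $6$.
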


\begin{proof}
The construction is very similar to the one in the proof of Theorem~\ref{th:np-hard}. In that proof, each snapshot (except the first) consists of two vertices $u$ and $v$ connected to the center $c$, which is already "corrupted". Hence, in this snapshot, both $u$ and $v$ are "corrupted" if and only if $u$ or $v$ was "corrupted" before. We get the same property if, in the snapshot, $uv$ is an edge, and they have no other neighbors. With this observation, we are able to \emph{merge} snapshots, in the sense that if we have consecutive snapshots of the form $\{u_i,v_i\}$ in the construction of Theorem~\ref{th:np-hard}, where each vertex appears at most once, we can make a single snapshot where the set of edges are the corresponding $u_iv_i$. We use a version of \textsc{$3$-SAT} where each variable appears at most 4 times to limit the number of snapshots.\\

    Consider a \textsc{$3$-SAT} formula $\Phi$ in its conjunctive normal form with exactly $3$ literals per clause, where each variable appears at most 4 times. Let $(C_i)_{i\in[1,m]}$ be its clauses. For $i\in[1,m]$, let $(l_{i,j})_{j\in \{1,2,3\}}$ be the literals of $C_i$. We have $\displaystyle \Phi=\bigvee_{i=1}^m C_i=\bigvee_{i=1}^m \bigwedge_{j=1}^3 l_{i,j}$. Let $k$ be the number of variables used in the formula $\Phi$. For each $(i,j)\in [1,n]\times \{1,2,3\}$, there exists $l \in [1,k]$ such that $l_{i,j}=\x{l}$ or $l_{i,j}=\nx{l}$.

    Let us construct a temporal star $\mathcal{G}$ such that $"TZ"(\mathcal{G})= k$ if and only if $\Phi$ is satisfiable. The set of vertices $V$ consists in:

\begin{itemize}
\item For each variable $\x{i}$ (with $i \in [1, k]$), we have four vertices $\vx{i}$, $\nvx{i}$, $\vx{i}'$, and $\nvx{i}'$;
\item For each clause $C_i$ (with $i \in [1,m]$), we have three vertices $\C{i,1}$, $\C{i,2}$, and $\C{i,3}$.
\end{itemize}

We will now describe each snapshot, which allows us to deduce $E$ and $\lambda$: 
\begin{enumerate}
\item Snapshot 1 - $E_1=\cup_{i\in[1,k]}\{\vx{i}\vx{i}',~\nvx{i}\nvx{i}'\}$: This is equivalent to Snapshots $3i-1$ and $3i$ for $i \in [1,k]$ of the construction for Lemma~\ref{lem:np_hard_star}.
\item Snapshot 2 - $E_1=\cup_{i\in[1,k]}\{\vx{i}',\nvx{i}'\}$: This is equivalent to Snapshots $3i+1$ for $i \in [1,k]$ of the construction for Lemma~\ref{lem:np_hard_star}.
\item Snapshot $2+l$, with $l\in[1,4]$ - For each variable $x_i$, we look at its $l$th occurrence in a clause. If $x_i$ appears positively in some clause $C_{j}$, at position $p$, we have the edge $\vx{i}\C{j,p}$ in $E_{2+l}$. If it appears negatively in that clause at that position, we instead add the edge $\nvx{i}\C{j,p}$ in $E_{2+l}$. Those snapshots are equivalent to Snapshots $3k-1+6i$ to $3k+1+6i$ of the construction for Theorem~\ref{th:np-hard}, except that instead of processing the "corruption" clause by clause, we parallelize according to an ordering of uses of a variable.
\item Snapshots 7, 8 and 9 - , $E_7=\cup_{i\in[1,m]}\{\C{i,1}\C{i,2}\}$, $E_8=\cup_{i\in[1,m]}\{\C{i,1}\C{i,3}\}$ and $E_9=\cup_{i\in[1,m]}\{\C{i,2}\C{i,3}\}$. Those snapshots are equivalent to Snapshots $3k+6i+2$ to $3k+6i+4$ of the construction for Theorem~\ref{th:np-hard}.
\item Snapshot 10 - $E_{10}=\cup_{i\in[1,k]}\{\vx{i}\nvx{i}\}$: This is equivalent to Snapshots $3k+6m+2$ to $4k+6m+1$ of the construction for Theorem~\ref{th:np-hard}.
\end{enumerate}

Note that all vertices have at most 6 neighbors in the "underlying graph".

The proof from here is the same as in Lemma~\ref{lem:np_hard_star}. The NP-hardness follows from the fact that restricting the number of appearances of each variable to at most 4 does not change the hardness (see Theorem 2.3 of~\cite{TOVEY198485}).
\end{proof}

\section{Temporal Graphs of Lifespan 1}\label{sec:lifespan1}

In this section, we consider "temporal graphs" with "lifespan" 1. In particular, it is equivalent to the case of static graphs, as we only consider a single set of edges. % we can restate Theorem~\ref{thm:pt1} in terms of static graphs. 
The following result answers a question of Section 3.2 of~\cite{HOGBEN20121994}, which asks whether there exists a characterization of minimum (static) "zero forcing sets@@static" with "propagation" time equal to $1$.  Borrowing from static graph terminology, we say a "temporal zero forcing set" $S$ of a "temporal graph" $\mathcal{G}=(V,E,\lambda)$ is a ""1-step forcing set"" if there exists a "corrupting function" $\texttt{corr}$ such that $\texttt{corr}(0) = S$ and $\texttt{corr}(1) = V$.

\begin{theorem}\label{thm:pt1}
  \textsc{"Temporal Zero Forcing Set@@problem"} is \texttt{NP}-hard when
  restricted to "temporal graphs" whose "lifespan" is $1$.
\end{theorem}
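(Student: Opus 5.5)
The plan is a polynomial reduction from \textsc{$3$-SAT}, following the pattern of Lemma~\ref{lem:np_hard_star}. Since the lifespan is $1$, the problem is about a single static graph $G=(V,E)$. The first step is to restate the condition combinatorially. A set $S$ is a "1-step forcing set" if and only if, writing $U = V\setminus S$, every $v\in U$ has a neighbor $u\in S$ with $N_G(u)\cap U=\{v\}$. Call such a $u$ a \emph{private forcer} of $v$. Distinct vertices of $U$ then have distinct private forcers, so there is an injection $U\to S$. Minimising $|S|$ is therefore the same as maximising a set $U$ of ``uncorrupted'' vertices in which every member has a private forcer outside $U$. This reformulation removes all dynamics, so the rest of the proof is purely about choosing $U$ in a static graph.

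For the reduction, given $\Phi$ with $k$ variables and $m$ clauses, I would build $G$ from three kinds of gadgets.
\begin{itemize}
\item \emph{Variable gadgets.} For each $x_i$, two literal vertices $v_i,\overline{v_i}$ share a common forcer $f_i$ adjacent to both. Then at most one of them can lie in $U$ via $f_i$. Padding with private leaves-and-forcers makes it profitable, counting-wise, to put exactly one literal in $U$; this is the analogue of Corollary~\ref{col:goodset}.
\item \emph{Clause gadgets.} For each $C_j$, a clause vertex $u_j$ is adjacent to the three literal vertices of its literals, plus gadget vertices that can force $u_j$ only if at least one adjacent literal vertex is in $S$. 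Since a literal is true exactly when its vertex is \emph{not} in $U$, I would orient the encoding so that the true literal ends up in $S$.
\end{itemize}
The target is a threshold $K$, linear in $k+m$ and computed exactly, such that $G$ has a "1-step forcing set" of size $K$ if and only if $\Phi$ is satisfiable. The forward direction is routine: from a satisfying valuation, read off $U$ and exhibit a private forcer for each of its vertices.

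The main obstacle is the converse, and specifically the counting argument. I must show that any $U$ of size $|V|-K$ has the canonical shape: exactly one literal per variable in $U$, and every clause vertex in $U$ with a valid private forcer. The danger is non-canonical solutions that gain elsewhere, for instance by putting $f_i$ itself into $U$, or by using literal vertices as forcers for clause vertices in unintended ways. To block these, I would first try attaching to every ``structural'' vertex $w$ a pendant pair, namely a leaf $\ell_w$ whose only neighbor is $w$. A leaf can only be forced by $w$, and $w\in U$ leaves $\ell_w$ either in $S$ or needing $w$ as a forcer, which is impossible. This pins $w$ to $S$ at a controlled cost. Then the pigeonhole argument of Corollary~\ref{col:goodset}, combined with the injection $U\to S$, should force the canonical shape, and a satisfying assignment can be read off. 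If this pendant trick leaves unwanted slack, the fallback is to reduce instead from \textsc{Exact Cover by $3$-Sets}, which matches the ``private forcer'' (perfect-code-like) structure more naturally.
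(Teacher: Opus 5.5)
Your private-forcer reformulation is correct: a vertex $v\in U=V\setminus S$ needs some $u\in S$ with $N(u)\cap U=\{v\}$, so forcers are private and $|U|\le|S|$. Beyond that, however, you have a plan rather than a proof. No graph and no threshold $K$ is actually fixed. The converse, which you rightly call the main obstacle, is left at ``should force the canonical shape''. The fallback to \textsc{Exact Cover by $3$-Sets} is only named.

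The one concrete rigidity device you propose also does the opposite of what you need. A pendant leaf $\ell_w$ does not pin $w$ into $S$:
\begin{itemize}
\item If $\ell_w\in S$, then $N(\ell_w)=\{w\}$, so $\ell_w$ is a private forcer of $w$. Hence $w\in U$ is always feasible, at the cost of one vertex.
\item If instead $w\in S$ and $\ell_w\in U$, then $w$'s unique forcing is consumed by the leaf, and all other neighbors of $w$ must lie in $S$.
\end{itemize}
So keeping $w$ in $S$ as a useful forcer now costs two, while making $w$ uncorrupted costs one. Leaves hand out free forcers and create exactly the slack (e.g.\ $f_i\in U$) you want to exclude. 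Likewise, a shared forcer $f_i$ only limits what $f_i$ itself forces. Nothing prevents $v_i$ or $\overline{v_i}$ from being forced from the clause side, so ``exactly one literal in $U$'' is not enforced.

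The clause side has a structural problem too. A vertex $u$ forces $v$ only if \emph{all} of $N(u)\setminus\{v\}$ lies in $S$, so a single forcer can only test a conjunction. ``At least one literal is true'' therefore needs one candidate forcer per literal occurrence. You must then account for the unused candidates, which could otherwise be spent forcing literal vertices and so break the variable gadget.

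Controlling this is precisely what the paper's construction does, and its tool is cliques, not leaves. A clique containing two uncorrupted vertices cannot be forced into by its own members, nor by any vertex adjacent to all of it. The paper's construction has three parts:
\begin{itemize}
\item Each literal becomes a clique $K_Z$, joined by a perfect matching to the clique of its negation, with $Z\ge 3r+3$. Counting then forces one whole clique per variable into $S$, and after normalization the other clique is fully uncorrupted.
\item Each literal occurrence gets its own vertex $u_{j,k}$, adjacent to $c_j$ and to the whole literal clique. All the $u_{j,k}$ together form a $3r$-clique.
\item The clause vertices $c_j$ form an $r$-clique.
\end{itemize}
With threshold $Z\ell+3r$, this layered clique structure forces either all of layer $2$ or all of layer $3$ into $S$; otherwise layers $2$ and $3$ alone cost at least $(3r-1)+(r-1)>3r$. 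In each of the two cases a satisfying assignment can be read off. Without a mechanism of this kind and an explicit count, your converse direction is not established.
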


\begin{proof}
  Let $\Phi = C_1 \land C_2 \land \dots \land C_r$ be a \textsc{$3$-SAT} formula of $r > 2$ clauses over $\ell > 2$ variables $x_1, x_2, \dots, x_\ell$. We construct a graph $G$ (which is equivalent to a "temporal graph" of "lifespan" 1) as follows:
    \begin{itemize}
      \item Fix a constant $Z \ge 3r + 3$. For each variable $x_i$, create a clique $K_Z(x_i)$ of size $Z$ consisting of vertices named $x_i^1, x_i^2, \dots, x_i^Z$, and a clique $K_Z(\bar{x}_i)$ of size $Z$ consisting of vertices named $\bar{x}_i^1, \bar{x}_i^2, \dots, \bar{x}_i^Z$.       Connect the cliques $K_Z(x_i)$ and $K_Z(\bar{x}_i)$ by edges $\{x_i^j, \bar{x}_i^j\}$ for $j = 1, 2, \dots, Z$. The union of these two cliques is called the gadget for variable $x_i$ and denoted $K_Z(x_i, \bar{x}_i)$. See Figure~\ref{fig:gadget_layer1} for an illustration of the gadget $K_Z(x_i, \bar{x}_i)$.
      \item For each clause $C_j = (l_{j,1} \lor l_{j,2} \lor l_{j,3})$, create a vertex $c_j$. Also create three vertices $u_{j,1}$, $u_{j,2}$, and $u_{j,3}$ and connect each of them to $c_j$. The vertex $u_{j,k}$ is also connected to all vertices of the clique $K_Z(l_{j,k})$ corresponding to the literal $l_{j,k}$.
      \item Connect all vertices $u_{j,k}$ to each other to form a clique of size $3r$.
      \item Connect all vertices $c_j$ to each other to form a clique of size $r$.
    \end{itemize}

    We will refer to the cliques $K_Z(x_i)$ and $K_Z(\bar{x}_i)$ as layer 1, the vertices $u_{j,k}$ as layer 2, and the vertices $c_j$ as layer 3. In particular, there are no edges between layers 1 and 3. Let $G$ denote the graph constructed this way. See Figure~\ref{fig:graph_TZFS_fusion} for an illustration of the graph $G$.

    \begin{figure}[t]
      \centering
      \includegraphics[width=0.4\textwidth]{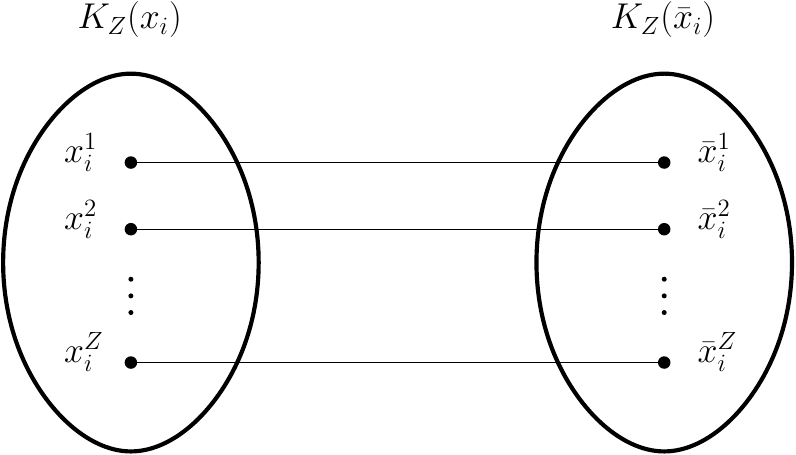}
      \caption{Illustration of the gadget $K_Z(x_i, \bar{x}_i)$ for a variable $x_i$.}
      \label{fig:gadget_layer1}
    \end{figure}

    % \begin{figure}
    %   \centering
    %   \includegraphics[width=0.8\textwidth]{figures/branching_clause.pdf}
    %   \caption{Illustration of edges related to a clause $C_j = x_3 \wedge \bar x_7 \wedge x_8$.}
    %   \label{fig:branching_clause}
    % \end{figure}

    % \begin{figure}
    %   \centering
    %   \includegraphics[width=0.8\textwidth]{figures/graph_TZFS_NPhardness.pdf}
    %   \caption{Illustration of the graph $G$ constructed from a \textsc{$3$-SAT} formula $\Phi$ with $r$ clauses and $\ell$ variables.}
    %   \label{fig:graph_TZFS_NPhardness}
    % \end{figure}

    \begin{figure}[t]
      \centering
      \includegraphics[width=\textwidth]{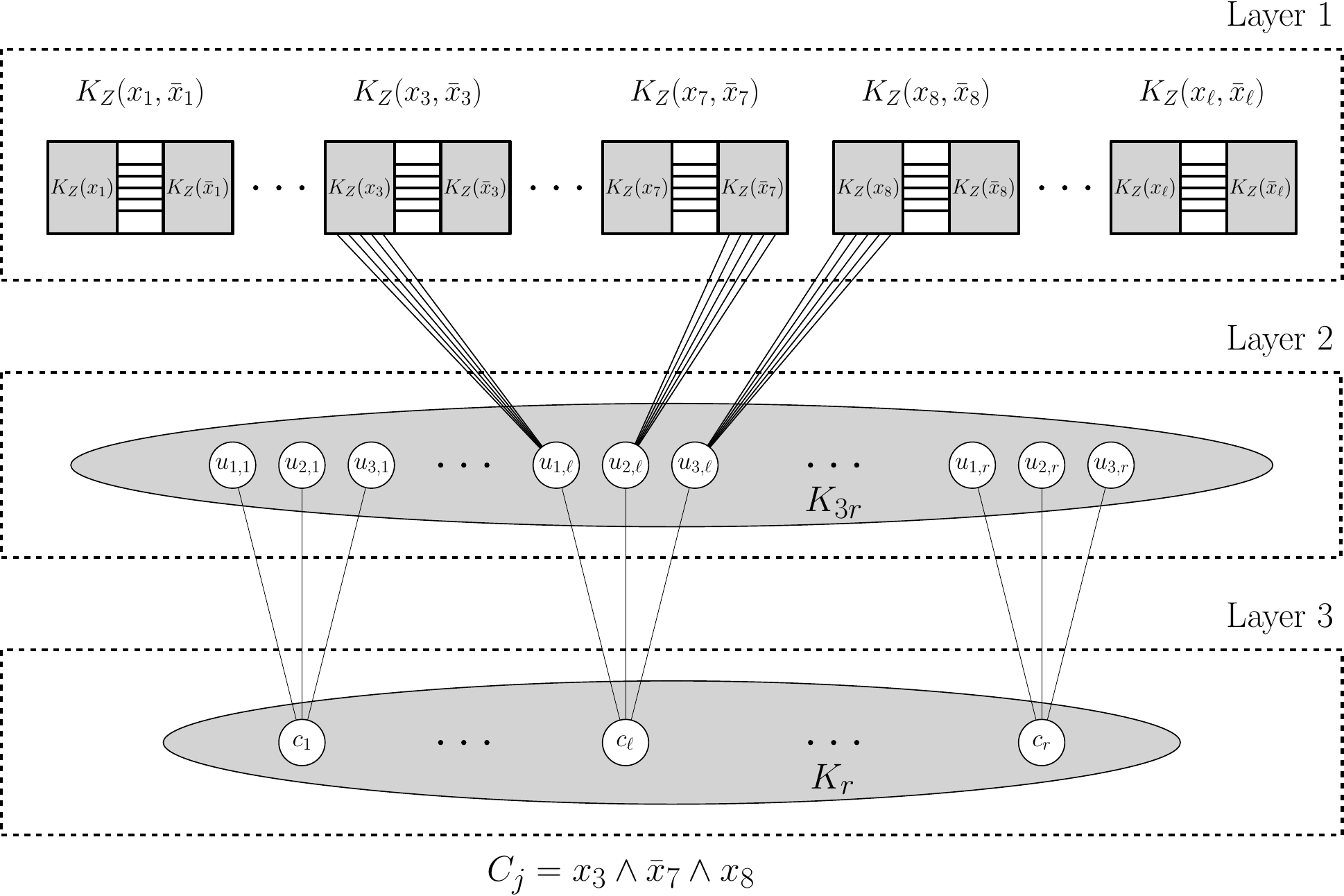}
      \caption{Illustration of the graph $G$ constructed from a \textsc{$3$-SAT} formula $\Phi$ with $r$ clauses and $\ell$ variables. We detail the edges related to the clause $C_j = x_3 \wedge \bar x_7 \wedge x_8$.}
      \label{fig:graph_TZFS_fusion}
    \end{figure}

    We show that $\Phi$ is satisfiable if and only if $Z(G) \le Z \times \ell + 3r$.

  \begin{lemma}\label{claim:NP1stepClaim1}
    If $\Phi$ is satisfiable, then $Z(G) \le Z \times \ell + 3r$.
  \end{lemma}

  \begin{proof} 
    Assume that $\Phi$ is satisfiable and let $\sigma$ be an assignment that satisfies $\Phi$. We construct a set of vertices $S$ as follows:
    \begin{itemize}
      \item For each variable $x_i$, if $\sigma(x_i) = \text{true}$, we add all vertices of the clique $K_Z(x_i)$ to $S$. Otherwise, we add all vertices of the clique $K_Z(\bar{x}_i)$ to $S$.
      \item We also add all the $3r$ vertices of layer 2 to $S$.
    \end{itemize}

    The set $S$ thus constructed is a "1-step forcing set" for the graph $G$. Indeed, for any gadget $K_Z(x_i, \bar{x}_i)$, all vertices of one of the two cliques are in $S$, as are all vertices of layer 2. Thus, if a vertex $x_i^j$ is not in $S$, it is the only "non-corrupted" neighbor of $\bar{x}_i^j$ and will thus be "corrupted" by this vertex. Similarly, if a vertex $\bar{x}_i^j$ is not in $S$, it will be "corrupted" by $x_i^j$. Thus, all vertices of layer 1 will be "corrupted" after one step.
    Since all vertices in layer 2 are in $S$, we only need to corrupt the vertices in layer 3. For each clause $C_j$, at least one of the literals $l_{j,k}$ is true under the assignment $\sigma$, so the corresponding vertex $u_{j,k}$ is in $S$ as well as all vertices of the clique $K_Z(l_{j,k})$. Thus, $c_j$ is the only "non-corrupted" neighbor of $u_{j,k}$ and will be "corrupted" by this vertex. Thus, all vertices of layer 3 will be "corrupted" after one step, and thus $S$ is a "1-step forcing set" for the graph $G$. Moreover, the size of $S$ is $Z \times \ell + 3r$, which concludes the proof.

    For pedagogical reasons, we propose a second "1-step forcing set" $S'$ of size $Z \times \ell + 3r$:
    \begin{itemize}
      \item For each variable $x_i$, if $\sigma(x_i) = \text{false}$, we add all vertices of the clique $K_Z(x_i)$ to $S'$. Otherwise, we add all vertices of the clique $K_Z(\bar{x}_i)$ to $S'$.
      \item We add all the $r$ vertices of layer 3 to $S'$.
      \item For each clause $C_j$, we choose a literal $l_{j}$ present in $C_j$ that is true under the assignment $\sigma$. We add to $S'$ all vertices of layer 2 except the vertices $u_j = u_{j,k}$ (where $k$ is the index of the literal $l_{j}$ in the clause $C_j$) corresponding to the literals chosen for each clause.
    \end{itemize}

    The set $S'$ thus constructed is also a "1-step forcing set" for the graph $G$. Indeed, the only "non-corrupted" neighbor of each vertex $c_j$ is $u_{j}$, so it becomes "corrupted" after one step. Thus, all vertices of layer 2 will be "corrupted" after one step. Note now that the neighbors of the vertices $u_{j}$ lying in layer 1 are all "non-corrupted", since $l_{j}$ is true under the assignment $\sigma$. Then if a vertex $x_i^j$ is not in $S'$, it is the only "non-corrupted" neighbor of $\bar{x}_i^j$ and will thus be "corrupted" by this vertex. Similarly, if a vertex $\bar{x}_i^j$ is not in $S'$, it will be "corrupted" by $x_i^j$. Thus, all vertices of layer 1 will be "corrupted" after one step and thus $S'$ is a "1-step forcing set" for the graph $G$. Moreover, the size of $S'$ is also $Z \times \ell + 3r$, which concludes the proof.
  \end{proof}

  We now aim to prove the converse, that is, if $Z(G) \le Z \times \ell + 3r$, then $\Phi$ is satisfiable. Let $S$ be a "1-step forcing set" for the graph $G$ of size at most $Z \times \ell + 3r$.

  \begin{lemma}\label{claim:NP1stepClaim2}
    For every gadget $K_Z(x_i, \bar{x}_i)$, the set $S$ contains all vertices of one of the two cliques $K_Z(x_i)$ or $K_Z(\bar{x}_i)$.
  \end{lemma}

  \begin{proof}
    Suppose there exists a gadget $K_Z(x_i, \bar{x}_i)$ such that $S$ contains neither all vertices of the clique $K_Z(x_i)$ nor all vertices of the clique $K_Z(\bar{x}_i)$. Let $x_i^j$ be a vertex of the clique $K_Z(x_i)$ not in $S$ and $\bar{x}_i^{j'}$ a vertex of the clique $K_Z(\bar{x}_i)$ not in $S$. The vertex $x_i^j$ cannot be "corrupted" by $\bar{x}_i^j$ because $\bar{x}_i^{j'}$ is "non-corrupted". The vertex $x_i^j$ is "corrupted" either by a vertex of the clique $K_Z(x_i)$ or by a vertex of layer 2. In both cases, this requires that all vertices of the clique $K_Z(x_i)$ except $x_i^j$ are in $S$. The same holds for $\bar{x}_i^{j'}$. Hence there are at least $2Z-2$ vertices of the gadget $K_Z(x_i, \bar{x}_i)$ in $S$. Since $Z \ge 3r + \ell + 1$, there are at least $3r + \ell$ vertices of the gadget $K_Z(x_i, \bar{x}_i)$ in $S$.

    Thus, for each $i$, either one of the two cliques $K_Z(x_i)$ or $K_Z(\bar{x}_i)$ is contained in $S$ (so there are at least $Z$ vertices of $K_Z(x_i, \bar{x}_i)$ in $S$), or this is not the case, and there are at least $2Z-2$ vertices of $K_Z(x_i, \bar{x}_i)$ in $S$. If we are in the second case for at least one gadget, then there are at least $2Z-2 + (\ell-1)Z = (\ell+1)Z -2$ vertices in $S$. But $(\ell+1)Z -2 > \ell Z + 3r$ since $Z \ge 3r + 3$. Therefore, if the second case occurs for any gadget, then $|S| > \ell Z + 3r$, a contradiction. Hence for every gadget $K_Z(x_i, \bar{x}_i)$, the set $S$ contains all vertices of one of the two cliques $K_Z(x_i)$ or $K_Z(\bar{x}_i)$.
  \end{proof}

  \begin{lemma}\label{claim:NP1stepClaim3}
    We can reduce to the case where, for each gadget $K_Z(x_i, \bar{x}_i)$, one of the two cliques $K_Z(x_i)$ or $K_Z(\bar{x}_i)$ is contained in $S$, and the other one is completely "non-corrupted".
  \end{lemma}

  \begin{proof}
    Suppose that $S$ contains more than $Z$ vertices of a gadget $K_Z(x_i, \bar{x}_i)$. By Claim~\ref{claim:NP1stepClaim2}, we know that $S$ contains all vertices of one of the two cliques $K_Z(x_i)$ or $K_Z(\bar{x}_i)$. Without loss of generality, suppose that $S$ contains all vertices of the clique $K_Z(x_i)$. Let $\bar{x}_i^j$ be a vertex of the clique $K_Z(\bar{x}_i)$ that is in $S$. Note that there are at least three "non-corrupted" vertices in the clique $K_Z(\bar{x}_i)$ because otherwise we would have at least $2Z-2$ vertices of $K_Z(x_i, \bar{x}_i)$ in $S$, which is impossible by the same counting argument at the end of the proof of Claim~\ref{claim:NP1stepClaim2}. Thus, the only neighbor of $\bar{x}_i^j$ that does not necessarily have at least two "non-corrupted" neighbors is the vertex $x_i^j$. Removing the vertex $\bar{x}_i^j$ from $S$ does not change the "corruption" process except if $x_i^j$ was supposed to corrupt a vertex.

    If $x_i^j$ has no "non-corrupted" neighbors, then we can remove the vertex $\bar{x}_i^j$ from $S$ because $x_i^j$ will corrupt it.

    If $x_i^j$ has a unique "non-corrupted" neighbor $u_{i,k}$, then we can set $S' = S \setminus \{\bar{x}_i^j\} \cup \{u_{i,k}\}$. Thus, $x_i^j$ "corrupts" $\bar{x}_i^j$ instead of $u_{i,k}$, and $S'$ remains a "1-step forcing set" for $G$.

    If $x_i^j$ has at least two "non-corrupted" neighbors in layer 2, then by construction, this is the case for all vertices of the clique $K_Z(x_i)$ and therefore no one can corrupt the "non-corrupted" vertices of the clique $K_Z(\bar{x}_i)$. Indeed, no vertex in layer 2 can corrupt the "non-corrupted" vertices of the clique $K_Z(\bar{x}_i)$ because they form a clique with at least two "non-corrupted" vertices. Thus $S$ is not a "1-step forcing set" for the graph $G$, which is a contradiction.

    We can repeat this process until we have removed all vertices of the clique $K_Z(\bar{x}_i)$ from $S$. Thus, we can reduce to the case where for each gadget $K_Z(x_i, \bar{x}_i)$ one of the two cliques $K_Z(x_i)$ or $K_Z(\bar{x}_i)$ is contained in $S$ and that the other one is completely "non-corrupted".
  \end{proof}

  \begin{lemma}\label{claim:NP1stepClaim4}
    If $S$ contains all vertices of layer 2, then $\Phi$ is satisfiable.
  \end{lemma}

  \begin{proof}
    We define the valuation $\sigma$ such that $\sigma(x_i) = \text{true}$ if and only if all vertices of the clique $K_Z(x_i)$ are in $S$. By Claim~\ref{claim:NP1stepClaim3}, this well defines a valuation.

    If $S$ contains all vertices of layer 2 and exactly $Z$ vertices of each gadget $K_Z(x_i, \bar{x}_i)$, by cardinality, $S$ cannot contain vertices from layer 3. Thus, each vertex $c_j$ of layer 3 must be "corrupted" by a vertex $u_{j,k}$ of layer 2. For $u_{j,k}$ to corrupt vertex $c_j$, all vertices of the clique $K_Z(l_{j,k})$ must be in $S$. This means that the literal $l_{j,k}$ is true under the valuation $\sigma$. Thus, for each clause $C_j$, there exists a literal $l_{j,k}$ that is true under the valuation $\sigma$, which means that $\sigma$ satisfies $\Phi$.
  \end{proof}

  \begin{lemma}\label{claim:NP1stepClaim5}
      If $S$ contains all vertices of layer 3, then $\Phi$ is satisfiable.
  \end{lemma}

  \begin{proof}
      We define the valuation $\sigma$ such that $\sigma(x_i) = \text{false}$ if and only if all vertices of the clique $K_Z(x_i)$ are in $S$. By Claim~\ref{claim:NP1stepClaim3}, this well defines a valuation.

      If $S$ contains all vertices of layer 3 and exactly $Z$ vertices from each gadget $K_Z(x_i, \bar{x}_i)$, then by cardinality, $S$ contains at most $2r$ vertices in layer 2. By Claim~\ref{claim:NP1stepClaim3}, each "corrupted" vertex in layer 1 has at least one "not corrupted" neighbor in layer 1, so there can be no "corruption" from layer 1 to layer 2. Therefore, for the vertices of layer 2 to become "corrupted", the vertices of layer 3 must corrupt the vertices of layer 2. Since there are $r$ vertices in layer 3, there should be exactly $r$ "non-corrupted" vertices in layer 2, and each vertex of layer 3 must corrupt exactly one vertex of layer 2.
    
      Thus, each vertex $c_j$ of layer 3 has a unique "non-corrupted" neighbor $u_{j,k}$ in layer 2. Now observe that the corresponding clique $K_Z(l_{j,k})$ cannot contain vertices in $S$, otherwise, the vertex $u_{j,k}$ would be a "non-corrupted" neighbor of the vertices of $K_Z(l_{j,k})$, and then these vertices could not corrupt the vertices of $K_Z(\bar{l_{j,k}})$.
      Hence, the literal $l_{j,k}$ is true under the valuation $\sigma$. Since this holds for each layer 3 vertex $c_j$, it means that for each clause $C_j$, there is a literal $l_{j,k}$ that is true under $\sigma$, and thus $\sigma$ satisfies $\Phi$.
  \end{proof}

  To complete the proof, suppose we are neither in the configuration of Claim~\ref{claim:NP1stepClaim4} nor in that of Claim~\ref{claim:NP1stepClaim5}. By Claim~\ref{claim:NP1stepClaim3}, all vertices of layer 1 have at least one "non-corrupted" neighbor in layer 1, so we cannot have "corruption" from layer 1 to layer 2. Furthermore, since there is at least one "non-corrupted" vertex in layer 2 and layer 2 vertices form a $3r$-clique, we cannot have "corruption" from layer 2 to layer 1 or to layer 3. Similarly, since there is at least one "non-corrupted" vertex in layer 3 and layer 3 vertices form an $r$-clique, we cannot have "corruption" from layer 3 to layer 2. We deduce that to corrupt a "non-corrupted" vertex of layer 2, it must be "corrupted" by another vertex of layer 2, which implies that all other vertices of layer 2 are in $S$. Similarly, to corrupt a "non-corrupted" vertex of layer 3, all other vertices of layer 3 must be in $S$. This imposes that $S$ contains $Z \times \ell$ vertices in layer 1 (Claim~\ref{claim:NP1stepClaim3}), $3r - 1$ vertices in layer 2, and $r-1$ vertices in layer 3. Thus $|S| \ge Z \times \ell + 4r - 2 > Z\times \ell + 3r$ whenever $r > 2$, which is a contradiction. We conclude that we must be in the configuration of Claim~\ref{claim:NP1stepClaim4} or in that of Claim~\ref{claim:NP1stepClaim5}, and thus $\Phi$ is satisfiable.
\end{proof}

\section{Complexities for Temporal Stars}\label{sec:temporal-stars}
Theorem~\ref{th:np-hard} shows that having a star as the "underlying graph" is enough for "Temporal Zero Forcing Set@@problem" to be NP-hard. We now prove that the problem remains hard, even when parametrized by the size of the zero forcing set. However, we also prove that the problem is polynomial when the number of snapshots is low: if each snapshot must corrupt a vertex, we provide an algorithm. The polynomiality remains even when a logarithmic number of snapshots are useless, in the sense that no vertex gets "corrupted" at their time.

% we investigate which additional restrictions could reduce the complexity of \textsc{"Temporal Zero Forcing Set@@problem"} when the input "temporal graph" is a temporal star. Using the \textsc{$k$-Multicolored Clique} reduction technique (see~\cite{FELLOWS200953}), we prove that \textsc{"Temporal Zero Forcing Set@@problem"} is \texttt{W[1]}-hard when parameterized by the solution size, that is, the "temporal zero forcing number".

\begin{theorem}\label{th:w1}
  \textsc{"Temporal Zero Forcing Set@@problem"} is \texttt{W[1]}-hard parameterized by the "temporal zero forcing number" of the input graph when
  restricted to "temporal graphs" whose "underlying graph" is a star.
\end{theorem}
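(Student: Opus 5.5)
We reduce from \textsc{Multicolored Clique}, which is \texttt{W[1]}-hard parameterized by the number $k$ of colour classes. Let $H$ have vertex set partitioned into $V_1,\dots,V_k$. We build in FPT time (polynomial in $|H|$) a temporal star $\mathcal{G}$ with center $c$ such that $"TZ"(\mathcal{G}) \le k+1$ if and only if $H$ has a multicolored clique. Here $k+1$ is a function of $k$ alone. We reuse the mechanism behind Lemma~\ref{lem:np_hard_star}: once $c$ is "corrupted", a snapshot containing exactly two leaves $\{a,b\}$ corrupts both if at least one is already corrupted, and does nothing otherwise.

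\textbf{Construction.} Take a leaf $c'$ that appears alone in snapshot~$1$, so $S$ must contain $c$ or $c'$. For each colour class $i$, add a leaf $s_i$ and one leaf $w_v$ per $v\in V_i$. Then add selection snapshots that force exactly one $w_v$ per class to become corrupted. Concretely, for each class $i$ we add a block of snapshots that is passable only if a single ``selector'' in that class is in $S$, in the spirit of the pairs $\{\vx{i},\vx{i}'\}$, $\{\nvx{i},\nvx{i}'\}$, $\{\vx{i}',\nvx{i}'\}$. A simple choice is to put the edge of $s_i$ together with each $w_v$ in its own snapshot $\{s_i,w_v\}$; then $s_i\in S$ corrupts all of $V_i$'s leaves, which is too much. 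So instead we give each $w_v$ a private partner leaf $w'_v$ and make the last snapshot involving $\{w'_v : v\in V_i\}$ a snapshot whose corruption requires all but one to be corrupted. That is, we list snapshots $\{w'_u, w'_v\}$ for all pairs $u,v\in V_i$ after the selection, so that one corrupted partner spreads to all. The intended solution puts $c$ and, for each $i$, exactly one $w_v$ (the chosen vertex) into $S$.

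For each pair of classes $i<j$ and each non-edge $uv$ of $H$ with $u\in V_i$, $v\in V_j$, we need a check. Conversely, for edges we add ``edge leaves'' $e_{uv}$ and a leaf $p_{ij}$ that must get corrupted. Snapshot $\{e_{uv}, w_u\}$ is placed before $\{e_{uv},w_v\}$, so $e_{uv}$ is corrupted by the time it meets $w_v$ only if $w_u$ was chosen. We then want $p_{ij}$ corrupted only if some $e_{uv}$ is ``doubly certified''. To encode the AND, we use a sequence $\{x, w_u\}$, then $\{x,w_v\}$ with three-leaf snapshots $\{a,b,d\}$: if exactly two are corrupted, the third becomes corrupted. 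Hence a snapshot $\{w_u, w_v, e_{uv}\}$ corrupts $e_{uv}$ iff both $w_u,w_v$ are already corrupted (or $e_{uv}$ was already, and another mechanism applies). A subsequent snapshot $\{e_{uv}, p_{ij}\}$ for every edge then corrupts $p_{ij}$ iff some edge between the chosen vertices exists. Finally, cleanup snapshots $\{p_{ij}, e_{uv}\}$ for all edges, and pair snapshots among the $w$'s of each class, corrupt everything else. The total number of leaves and snapshots is polynomial.

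\textbf{Correctness.} Forward direction: given a clique $\{v_1,\dots,v_k\}$, take $S=\{c\}\cup\{w_{v_i}\}$ and simulate as in Lemma~\ref{lem:np_hard_star}. Reverse direction: mimic Lemma~\ref{cl:3sat-c} and Corollary~\ref{col:goodset}. Counting forces one vertex from $\{c,c'\}$ and one ``selection'' vertex per class. We must then argue that the unique selected vertex per class is the only $w$ corrupted before the verification phase, so every $p_{ij}$ certifies an actual edge between the selected vertices.

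\textbf{Main obstacle.} The delicate step is ensuring the budget of one vertex per class truly pins down a single $w_v$ before the verification snapshots. We must also rule out cheating by putting $e_{uv}$ or $p_{ij}$ directly in $S$, or by placing $c'$ instead of $c$ and corrupting $c$ later. To rule out direct placement, each $p_{ij}$ must be mandatory but too expensive to place. We handle this by making $\binom{k}{2}$ be absorbed in the budget differently: target budget $k+1$, and add for each pair $\{i,j\}$ a forcing gadget in the style of $c'$, which makes any placement of $p_{ij}$ or $e_{uv}$ cost an extra vertex beyond one per class. I would first try to ensure no $w_v$ of class $i$ can be corrupted before the selection block except by membership in $S$. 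That is achieved by having all $w$-edges appear only after the selection block.
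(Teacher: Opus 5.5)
Your construction is the paper's once the abandoned pieces are removed. Drop the leaves $s_i$ entirely, since otherwise each can only be covered by putting it in $S$. Also drop the two-leaf test $\{e_{uv},w_u\},\{e_{uv},w_v\}$. What remains is exactly the paper's sequence, with $w_v,w'_v,e_{uv},p_{ij}$ in the roles of $v,v',x_{uv},y_{i,j}$:
\begin{itemize}
\item $\{c'\}$;
\item all $\{w_v,w'_v\}$;
\item all pairs $\{w'_u,w'_v\}$ inside each class;
\item the triples $\{w_u,w_v,e_{uv}\}$;
\item the pairs $\{e_{uv},p_{ij}\}$;
\item cleanup.
\end{itemize}
Your cleanup by pairs $\{p_{ij},e_{uv}\}$ and $\{w_u,w_v\}$ works as well as the paper's singleton snapshots $\{x_e\}$ and $\{v\}$. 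The forward direction is fine.

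The gap is in the reverse direction. You leave it open, and the remedy in your last paragraph is misguided. A per-pair gadget in the style of $c'$ cannot make placing $p_{ij}$ or $e_{uv}$ in $S$ cost an extra vertex, because a leaf appearing alone is corrupted for free as soon as $c$ is. A gadget that genuinely forced one more initial vertex per pair would push the optimum above your target $k+1$ and break the equivalence. No gadget is needed; what is missing is the counting and ordering argument of Lemma~\ref{cl:size-multicolor}.

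\emph{Counting.} The leaf $c'$ forces $S\cap\{c,c'\}\neq\emptyset$, and $c'\in S$ is harmless since it corrupts $c$ at snapshot $1$. For each class $i$, let $F_i=\{w_v,w'_v : v\in V_i\}$. Each $w'_v$ only ever meets $c$ together with members of $F_i$, and $w_v$ first appears in $\{w_v,w'_v\}$. So if $S\cap F_i=\emptyset$, every such snapshot leaves $c$ with two uncorrupted neighbours, and no $w'_v$ is ever corrupted. With $|S|\le k+1$, $S$ therefore consists of exactly one vertex of $\{c,c'\}$ and one of each $F_i$. In particular no $e_{uv}$ or $p_{ij}$ lies in $S$, which is precisely what you were trying to enforce.

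\emph{Ordering.} All snapshots $\{w_v,w'_v\}$ precede the pairs $\{w'_u,w'_v\}$. A non-selected $w_v$ therefore meets only an uncorrupted partner, and it does not reappear before the triples. Entering the triple phase, the corrupted $w$-leaves are exactly one $w_{v_i}$ per class.

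Since $e_{uv}\notin S$ and it first appears in its unique triple, that triple corrupts $e_{uv}$ if and only if $u$ and $v$ are both selected, and it never corrupts a $w$-leaf. Finally, $p_{ij}\notin S$, so its first corruption comes from some $\{e_{uv},p_{ij}\}$ whose $e_{uv}$ was corrupted in its triple. This forces $u=v_i$, $v=v_j$ and $v_iv_j\in E(H)$.
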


We prove this result by reducing from the Multicolored Clique problem, which consists of finding a clique of size $k$ in a $k$-colored graph such that each vertex in the clique has a different color. This problem is known to be \texttt{W[1]}-hard when parametrized by $k$; see, for instance, Lemma 1 of~\cite{FELLOWS200953}. 

\probl{""Multicolored Clique@@problem""}
{A graph $\mathcal{G}=(V,E)$, an integer $k$,  and a $k$-coloring $col:V\to [1,k]$.}
{A clique $Cl\subset V$ of size $k$ such that $\forall u,v\in C^2$, $col(u)\neq col(v)$ or a correct output that such a set does not exist.}

We prove that "Temporal Zero Forcing Set@@problem" is \texttt{W[1]}-hard parametrized by the "temporal zero forcing number" by reducing from the Multicolored Clique problem, which consists of finding a clique of size $k$ in a $k$-colored graph such that each vertex in the clique has a different color. This problem is known to be \texttt{W[1]}-hard when parametrized by $k$; see, for instance, Lemma 1 of~\cite{FELLOWS200953}.

\probl{"Multicolored Clique@@problem"}
{A graph $\mathcal{G}=(V,E)$, an integer $k$,  and a $k$-coloring $col:V\to [1,k]$.}
{A clique $Cl\subset V$ of size $k$ such that $\forall u,v\in C^2$, $col(u)\neq col(v)$ or a correct output that such a set does not exist.}

\begin{lemma}
Given a Multicolored Clique instance with $k$ colors, there exists a temporal star $\mathcal{G}$ such that $"TZ"(\mathcal{G})= k+1$ if and only if the Multicolored Clique instance has a colorful clique of size $k$.
\end{lemma}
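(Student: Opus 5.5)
The plan is to mimic the proof of Lemma~\ref{lem:np_hard_star}. We build a temporal star with center $c$ and a leaf $c'$ that appears only in snapshot $1$, so that one vertex of $\{c,c'\}$ must lie in $S$, as in Lemma~\ref{cl:3sat-c}. Either choice makes $c$ corrupted from snapshot $1$ on. The remaining budget of $k$ vertices should encode one selected vertex per color class.

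The basic tool is the following observation about a star whose center is corrupted. In a snapshot whose present leaves form a set $X$, a leaf gets corrupted exactly when it is the unique non-corrupted leaf of $X$. Hence a snapshot $\{a,b\}$ copies corruption between $a$ and $b$. A snapshot $\{a,b,d\}$ corrupts $d$ when $a$ and $b$ are both corrupted, so it acts as an AND gate.

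The construction has the following leaves besides $c'$:
\begin{itemize}
\item a leaf $w_v$ for every vertex $v$ of the Multicolored Clique instance; write $W_i$ for the leaves of color $i$;
\item for every edge $uv$ with $col(u)=i<j=col(v)$, two leaves $a_{uv}$ and $b_{uv}$;
\item for every pair $i<j$, a hub leaf $h_{ij}$.
\end{itemize}

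The snapshots come in phases.
\begin{enumerate}
\item Snapshot $1$ is $\{c'\}$.
\item The check phase treats each edge $uv$ in turn, using snapshot $\{w_u,a_{uv}\}$ followed by snapshot $\{a_{uv},w_v,b_{uv}\}$. If $u$ and $v$ are both selected, this corrupts $b_{uv}$. If at most one of them is selected, nothing in $W$ is touched, $b_{uv}$ is not corrupted, and $a_{uv}$ may be.
\item Then, for every pair $(i,j)$ and every edge $uv$ between colors $i$ and $j$, we add snapshot $\{b_{uv},h_{ij}\}$. This corrupts $h_{ij}$ as soon as some checked edge between $i$ and $j$ succeeded.
\item The spreading phase first runs, for each color $i$, a forward pass and then a backward pass of snapshots $\{w_{v_t},w_{v_{t+1}}\}$ along an ordering $v_1,\dots,v_n$ of the vertices of color $i$. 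The two passes corrupt all of $W_i$ from any single corrupted vertex. Next, each hub is spread to its edge gadgets by snapshots $\{h_{ij},b_{uv}\}$, then $\{b_{uv},a_{uv}\}$.
\end{enumerate}
The construction has polynomial size and is computable in polynomial time.

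For the forward direction, take a colorful clique $\{v^1,\dots,v^k\}$ and let $S=\{c\}\cup\{w_{v^i}\}_{i\in[1,k]}$. Simulating the phases shows that every $h_{ij}$ gets corrupted through the clique edge between colors $i$ and $j$. The spreading phase then corrupts every remaining leaf, because $W_i$ is nonempty in $S$ for each $i$ and every $h_{ij}$ is corrupted.

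The converse is the main obstacle. We must show that a temporal zero forcing set $S$ of size $k+1$ contains one vertex of $\{c,c'\}$ and exactly one vertex of each $W_i$.

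The key point is that the snapshots are ordered so that no vertex of $W$ is ever the target of a corruption before the spreading phase: in the check phase only $a$, $b$ and $h$ leaves can be the unique uncorrupted leaf corrupted. In the spreading phase, $W_i$ is only touched by snapshots inside $W_i$. Therefore, if $S\cap W_i=\emptyset$, the set $W_i$ is never corrupted. Together with $c'$, this forces $|S\cap W_i|=1$ for all $i$ by pigeonhole, as in Corollary~\ref{col:goodset}.

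Now no hub $h_{ij}$ lies in $S$, and the only snapshots in which $h_{ij}$ can become corrupted are those of the form $\{b_{uv},h_{ij}\}$ in the hub phase. At that time $b_{uv}$ must already be corrupted, and $b_{uv}\notin S$. By the AND gate, $b_{uv}$ is corrupted only if both $w_u$ and $w_v$ were corrupted at its check, that is, only if they are the selected vertices of colors $i$ and $j$. Hence the selected vertices are pairwise adjacent and form a colorful clique.

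The delicate part of the write-up is the order of the snapshots. It must keep every corruption in the check phase from leaking into $W$, and it must keep the center-adjacent AND snapshots from firing through a partially corrupted $a_{uv}$ alone. I would verify both by a case analysis on which leaves of each snapshot lie in $S$ or were corrupted earlier.
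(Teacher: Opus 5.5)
Your construction is in fact a valid reduction, but your converse argument rests on a claim that is false for it: it is not true that no vertex of $W$ is corrupted before the spreading phase. The leaves $a_{uv}$ and $b_{uv}$ first appear in the check phase, so if one of them lies in $S$ it can act as the already-corrupted partner. If $a_{uv}\in S$, then in the snapshot $\{w_u,a_{uv}\}$ the center corrupts $w_u$ whenever $w_u$ is not yet corrupted. If $b_{uv}\in S$ and $a_{uv}$ is corrupted, the snapshot $\{a_{uv},w_v,b_{uv}\}$ corrupts $w_v$. Concretely, take $k=2$ and a single edge $uv$. Then both $\{c,a_{uv},w_v\}$ and $\{c,w_u,b_{uv}\}$ are temporal zero forcing sets of size $k+1$, and each misses a whole color class of $W$. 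So the step from $S\cap W_i=\emptyset$ to $W_i$ never being corrupted fails, and the pigeonhole conclusion $|S\cap W_i|=1$ is false. Your later assertion $b_{uv}\notin S$ is false too: in the second example, $b_{uv}\in S$ feeds $h_{ij}$ directly. The case analysis you postpone would refute the claim rather than confirm it. The lower bound $\mathrm{TZ}(\mathcal{G})\ge k+1$, which you need for equality in the direction from a clique to $\mathrm{TZ}(\mathcal{G})=k+1$, tacitly rests on the same false claim.

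The argument can be repaired with a different counting. Let $Q_i=W_i\cup\{a_{uv}:col(u)=i\}\cup\{b_{uv}:col(v)=i\}$. The first corrupted vertex of $W_i$ is either in $S$, or is corrupted in a check snapshot by an $a$- or $b$-leaf at its first appearance, hence by an element of $S\cap Q_i$. The sets $Q_i$ are pairwise disjoint and avoid $c$, $c'$ and the hubs. So $|S|=k+1$ forces $|S\cap Q_i|=1$ and no hub in $S$, and exactly one vertex $\phi(i)$ of $W_i$ is corrupted before the spreading phase. For the corrupted $b_{uv}$ that feeds $h_{ij}$, you must then treat three cases separately: $b_{uv}\in S$, $a_{uv}\in S$, and neither. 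In the first case $b_{uv}$ is the only seed of color $j$, so $W_j$ can only be entered through $\{a_{uv},w_v,b_{uv}\}$, which needs $a_{uv}$ already corrupted. In each case you conclude $u=\phi(i)$ and $v=\phi(j)$. You also need to rule out $h_{ij}$ being first corrupted in a spreading snapshot $\{h_{ij},b_{uv}\}$.

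The paper avoids all of this with primed copies $v'$ that live only in an opening phase: snapshots $\{v,v'\}$, then all pairs $\{u',v'\}$ inside each color class, placed before any gadget leaf appears. This forces the budget onto $\bigcup_{v\in V_i}\{v,v'\}$, so no $x_e$ or $y_{i,j}$ can be in $S$ and the AND gate $\{u,v,x_{uv}\}$ behaves cleanly. Your construction has no such isolated phase, which is exactly why gadget leaves can serve as seeds.
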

%\begin{proof}

    Consider a graph $\mathcal{G}=(V,E)$  and a $k$-coloring $col:V\to [1,k]$ of its vertices. The graph has $n=|V|$ vertices and $m=|E|$ edges. For all $i\in[1,k]$ we denote $V_i=\{v\in V: col(v)=i\}$ and $n_i=|V_i|$. Our construction will be similar to the one used in the proof of Theorem~\ref{th:np-hard}.

    Let us construct a temporal star $\mathcal{G}=(V_\mathcal{G},E_\mathcal{G},\lambda)$ such that $"TZ"(\mathcal{G})= k+1$ if and only if $G$ has a multicolored $k$-clique. We start the construction of $\mathcal{G}$ with a single vertex $c$, which we call the \emph{center}. All the other vertices are connected to $c$. They are:
\begin{itemize}
\item A vertex $c'$;
\item For each vertex $v\in V$, we have two vertices $v$ and $v'$ in $V_\mathcal{G}$;
\item For each edge $e\in E$, we have a vertex $x_e$;
\item For each $1\le i< j \le k$, we have a vertex $y_{i,j}$.
\end{itemize}

To simplify the description $\lambda$, we introduce the value $s=\sum_{i\in[1,k]}\frac{n_i(n_i-1)}2$. The set of edges $E_\mathcal{G}$ corresponds to $\{cv\}_{v\in V_\mathcal{G}\setminus\{c\}}$. To describe $\lambda$, we will actually describe each snapshot. As all edges are connected to $c$, we will describe a snapshot by giving the set of vertices whose edges are present at that time, as well as some intuition behind the choice of this set:

\begin{enumerate}
\item Snapshot 1 - $\{c'\}$: The first snapshot is the only one in which $c'$ appears.
To ensure that $c'$ is "corrupted", either $c'$ or $c$ must be "corrupted" initially. In that case, $c$ is "corrupted" from snapshot $2$ on (otherwise, the whole graph cannot be "corrupted").
\item Snapshots 2 to $n+1$ - For each $v\in V$, we have the set $\{v,v'\}$: both vertices are "corrupted" after this step if and only if at least one of them was initially "corrupted".
\item Snapshots $n+2$ to $n+1+\sum_{i\in[1,k]}\frac{n_i(n_i-1)}2=n+1+s$ - For each $i\in[1,k]$, we use $\frac{n_i(n_i-1)}2$ consecutive snapshots. For each pair of vertices $u,v\in V_i^2$, we have the set $\{u',v'\}$ (one after another). For each $i$, all vertices in $V_i$ are "corrupted" at the end of those snapshots if and only if some vertex $v\in V_i$ was "corrupted" before time $n+2$.
\item  Snapshots $n+2+s$ to $n+1+s+m$ - For each $uv\in E$, we have the set $\{u,v,x_{uv}\}$. If some $x_{uv}$ was "not corrupted" before those snapshots, it becomes "corrupted" after those snapshots if and only if both $u$ and $v$ were "corrupted" before.
\item  Snapshots $n+2+s+m$ to $n+1+s+2m$ - For each $uv\in E$, we have the set $\{x_{uv},y_{i,j}\}$, where $i=col(u)$ and $j=col(v)$. If some $y_{i,j}$ was "not corrupted" before those snapshots, it becomes "corrupted" after those snapshots if and only if, for some edge $uv$ such that $col(u)=i$ and $col(v)=j$, $x_{uv}$ was "corrupted" before.
\item  Snapshots $n+2+s+2m$ to $n+1+s+3m$ - For each edge $e\in E$, we have the set $\{x_{e}\}$. It ensures that all vertices of the form $x_e$ get "corrupted" if $c$ was already "corrupted".
\item Snapshots $n+2+s+3m$ to $2n+1+s+3m$ - For each $v\in V$, we have the set $\{v\}$. It ensures, as in the previous step, that all vertices of the form $v$ get "corrupted" if $c$ was already "corrupted".
\end{enumerate}

    \begin{lemma}\label{cl:size-multicolor}
        Any corrupting set of size $k+1$ contains either $c$ or $c'$ and, for each $i\in[1,k]$, exactly one vertex of the form $v$ or $v'$, with $v\in V_i$.
    \end{lemma}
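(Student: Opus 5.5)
We will follow the pattern of Lemma~\ref{cl:3sat-c} and Corollary~\ref{col:goodset}. First we show that any corrupting set $S$ meets $\{c,c'\}$ and meets each set $W_i=\{v,v' : v\in V_i\}$ for $i\in[1,k]$. A pigeonhole argument then gives the exact counts when $|S|=k+1$: $S$ has exactly one vertex in $\{c,c'\}$ and exactly one in each $W_i$. The sets $\{c,c'\}$ and $W_1,\dots,W_k$ are pairwise disjoint, so $k+1$ disjoint constraints on a set of size $k+1$ force each to be hit exactly once.

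\emph{The pair $\{c,c'\}$.} The edge $cc'$ appears only in snapshot~$1$, and it is the only edge of that snapshot. Snapshot~$1$ is also the first step of the process. Hence $c'$ can be corrupted only if $c'\in S$, or if $c\in S$ and $c$ corrupts $c'$ at time~$1$. Either way $S\cap\{c,c'\}\neq\emptyset$.

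\emph{Each $W_i$.} For a vertex $v\in V_i$, the vertex $v'$ is incident to an edge only in snapshot $1+\mathrm{index}(v)$ (the set $\{v,v'\}$) and in the snapshots $n+2,\dots,n+1+s$ for pairs inside $V_i$. All these snapshots happen before time $n+2+s$. Suppose for contradiction that $S\cap W_i=\emptyset$. We show by induction on the snapshots up to time $n+1+s$ that no vertex of $W_i$ is ever corrupted. In every snapshot containing a vertex of $W_i$, the present edges are exactly $cx$ and $cy$, where $\{x,y\}\subseteq W_i$ are both uncorrupted by the induction hypothesis. So $c$ has two uncorrupted neighbours and corrupts nothing. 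The other endpoints $x,y$ are uncorrupted and cannot act as forcers. No other vertex is adjacent to $x$ or $y$ in that snapshot. Therefore after time $n+1+s$ every $v'$ with $v\in V_i$ is still uncorrupted, and no later snapshot contains $v'$. This contradicts $\texttt{corr}(T)=V_\mathcal{G}$. When $V_i$ is nonempty this shows $S\cap W_i\neq\emptyset$.

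The main point to check is the induction step: in each relevant snapshot both non-centre vertices belong to $W_i$. This holds because snapshots of type~2 use $\{v,v'\}$ and snapshots of type~3 use pairs $\{u',v'\}$ with $u,v$ in the same colour class. One subtlety is when $n_i=1$: then $v'$ appears only in snapshot $\{v,v'\}$, and the argument is unchanged. The other subtlety is that the statement only concerns sets of size $k+1$, which is exactly where the pigeonhole step applies. We conclude that $|S|=k+1$ forces exactly one element of $S$ in $\{c,c'\}$, exactly one in each $W_i$, and no other vertices (no $x_e$ and no $y_{i,j}$).
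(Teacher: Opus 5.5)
Your proof is correct and is essentially the paper's argument. The vertex $c'$ occurs only in snapshot~$1$, and each $v'$ with $v\in V_i$ occurs only in snapshots whose non-centre vertices all lie in $\bigcup_{u\in V_i}\{u,u'\}$; a pigeonhole count over the $k+1$ disjoint groups finishes, and your explicit induction spells out the step the paper compresses into ``it implies''. As in the paper, the pigeonhole step tacitly needs every colour class $V_i$ to be nonempty, which is the standard harmless assumption for \textsc{Multicolored Clique}.
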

    \begin{proof}
The first snapshot is the only one where $c'$ appears. For $c'$ to be "corrupted" at the end, $c$ or $c'$ must be initially "corrupted".

For each vertex $v\in V_i$, $v'$ appears once with $v$, then again for each other $u'$ with $u\in V_i$. If it was "not corrupted" before, it must be "corrupted" in one of those snapshots, since it will never appear again. It implies that at least one vertex in $\bigcup_{v\in V_i}\{v,v'\}$ is initially "corrupted".

As we consider a set of size $k+1$, it means that for each $i\in[1,k]$, exactly one vertex of the form $v$ or $v'$, with $v\in V_i$, is in the initial set.
   \end{proof}

    \begin{claim}
        If there exists a multicolored clique of size $k$ of $G$, then $"TZ"(\mathcal{G})= k+1$.
    \end{claim}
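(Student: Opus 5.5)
The claim has two halves. For $"TZ"(\mathcal{G})\le k+1$ we exhibit a "temporal zero forcing set" of size $k+1$ and simulate the process snapshot by snapshot. For $"TZ"(\mathcal{G})\ge k+1$ we reuse the counting argument of Lemma~\ref{cl:size-multicolor}.

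\emph{Upper bound.} Let $\{v_1,\dots,v_k\}$ be a multicolored clique with $col(v_i)=i$, and set $S=\{c,v_1,\dots,v_k\}$. The process is monotone: a "corrupted" vertex stays "corrupted", and a snapshot in which $c$ has two or more "non-corrupted" neighbors just does nothing. So it suffices to show, group by group, that every vertex is "corrupted" by time $2n+1+s+3m$.
\begin{enumerate}
\item In snapshot $1$, $c$ has the single neighbor $c'$, so $c$ "corrupts" $c'$.
\item In snapshots $2$ to $n+1$, the snapshot $\{v_i,v_i'\}$ has $v_i$ already "corrupted", so $c$ "corrupts" $v_i'$. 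All other snapshots of this group change nothing.
\item In the group for color $i$, every vertex $u\in V_i\setminus\{v_i\}$ occurs in a snapshot $\{u',v_i'\}$, because all pairs of $V_i$ are enumerated. At that time $v_i'$ is already "corrupted", so $c$ "corrupts" $u'$ whatever the order of the pairs. Hence all primed vertices are "corrupted" after snapshot $n+1+s$.
\item For each clique edge $v_iv_j$, both $v_i$ and $v_j$ lie in $S$. In the snapshot $\{v_i,v_j,x_{v_iv_j}\}$, $x_{v_iv_j}$ is therefore the only "non-corrupted" neighbor of $c$, and $c$ "corrupts" it.
\item For every pair $i<j$, the snapshot $\{x_{v_iv_j},y_{i,j}\}$ occurs. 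Its first vertex is already "corrupted", so $c$ "corrupts" $y_{i,j}$. This covers all $y$-vertices.
\item The singleton snapshots $\{x_e\}$ have $c$ with a single neighbor, so $c$ "corrupts" every $x_e$.
\item Likewise the singleton snapshots $\{v\}$ let $c$ "corrupt" every unprimed $v$.
\end{enumerate}
Thus $c$, $c'$, all $v$, $v'$, $x_e$ and $y_{i,j}$ are "corrupted", and $S$ is a "temporal zero forcing set" of size $k+1$.

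\emph{Lower bound.} The proof of Lemma~\ref{cl:size-multicolor} never uses the size of the set. Take any "temporal zero forcing set" $S'$.
\begin{itemize}
\item $c'$ only appears in snapshot $1$, alone with $c$, so $S'$ meets $\{c,c'\}$.
\item Fix a color $i$. If $S'$ contains no vertex of $\bigcup_{v\in V_i}\{v,v'\}$, consider the snapshots in which some $v'$ with $v\in V_i$ appears. In each of them $c$ has at least two "non-corrupted" neighbors, so no $v'$ with $v\in V_i$ is ever "corrupted". This is a contradiction.
\end{itemize}
The sets $\{c,c'\}$ and $\bigcup_{v\in V_i}\{v,v'\}$ for $i\in[1,k]$ are pairwise disjoint, so $|S'|\ge k+1$. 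Hence $"TZ"(\mathcal{G})=k+1$.

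\emph{Main obstacle.} The delicate step is the third group in the upper bound. One must check that the corruption of all of $V_i'$ does not depend on the order in which the pairs are listed, and the key point is that each $u'$ is paired directly with the already "corrupted" $v_i'$. One must also check that edges outside the clique, in the fourth group, do not block anything; this holds because failed corruption attempts do no harm.
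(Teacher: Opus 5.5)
Your proof is correct and follows the paper's argument: you use the same set $S=\{c,v_1,\dots,v_k\}$ and the same stage-by-stage simulation through the seven groups of snapshots. Your explicit lower bound supplies a step the paper's proof of this claim leaves implicit, and you correctly note that the proof of Lemma~\ref{cl:size-multicolor} never uses the size of the set.
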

    \begin{proof}
        Suppose $G$ has a multicolored clique of size $k$. We will call, for each $i\in[1,k]$, $v_i$ the vertex of $V_i$ in the clique.

        Consider the set $S$ composed of $c$, and, for each $i\in[1,k]$, $v_i$. We have $|S|=k+1$. We will prove that $S$ is a "temporal zero forcing set" of $\mathcal{G}$.

\begin{itemize}
\item Step 1 ensures that $c'$ gets "corrupted" as $c$ is in $S$.
\item Step 2 ensures that, for $i\in[1,k]$, $v_i'$ gets "corrupted".
\item Step 3 ensures that, for all $i\in[1,k]$, for each $v\in V_i$, $v'$ gets "corrupted".
\item Step 4 ensures that, for all $e\in E_{Cl}$ (the set of edges of the multicolored clique), for each $x_e$ gets "corrupted".
\item Step 5 ensures that, for all $1\le i<j\le k$, $y_{i,j}$ gets "corrupted" by the vertex $x_{v_iv_j}$.
\item Step 6 ensures that $c$ corrupts all remaining vertices of the form $x_e$, with $e\in E$.
\item Step 7 ensures that $c$ "corrupts" all remaining vertices of the form $v$, with $v\in V$.
\end{itemize}
This allows us to conclude that $S$ is a "temporal zero forcing set" of $\mathcal{G}$.
    \end{proof}

    \begin{claim}
        If $"TZ"(\mathcal{G})= k+1$, then there exists a multicolored clique of size $k$ of $G$.
    \end{claim}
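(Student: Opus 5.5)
We take a temporal zero forcing set $S$ of size $k+1$ with corrupting function $\texttt{corr}$, and extract the clique directly from $S$. By Lemma~\ref{cl:size-multicolor}, $S$ contains one vertex of $\{c,c'\}$ and, for each $i\in[1,k]$, exactly one vertex of the form $v$ or $v'$ with $v\in V_i$. For each $i$ we let $v_i\in V_i$ be the vertex of $G$ with $v_i\in S$ or $v_i'\in S$. The goal is to show that $\{v_1,\dots,v_k\}$ is a clique. The steps below run along the snapshot phases, each time recording exactly which vertices can be corrupted.

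\textbf{Center and phase~2.} The center $c$ is corrupted from snapshot~$2$ on. Either $c\in S$, or $c'\in S$ and $c'$ corrupts $c$ in snapshot~$1$, the single-edge snapshot. In phase~2 (snapshot $\{v,v'\}$), $c$ has two neighbours, so it corrupts the remaining one exactly when one of $v,v'$ is already corrupted. Before snapshot $n+2$, a vertex $v$ of $G$ (the unprimed copy) can only be corrupted in its own phase-2 snapshot. Hence, after phase~2, the corrupted unprimed vertices are exactly $v_1,\dots,v_k$.

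\textbf{Phases~3 to~5.} Edges at unprimed vertices and at the $x_e$ do not reappear until phase~4, so phase~3 corrupts only primed vertices. In phase~4, snapshot $\{u,v,x_{uv}\}$ lets $c$ corrupt a vertex only if exactly one of the three is uncorrupted. No $x_e$ and no $y_{i,j}$ is in $S$, by the size count. The crucial point is that phase~4 could also corrupt a new unprimed vertex $u$ when $v$ and $x_{uv}$ are already corrupted; this would enlarge the corrupted set and break the argument. We rule it out by induction over the phase-4 snapshots: $x_{uv}$ becomes corrupted only when both $u$ and $v$ are, and initially no $x_e$ is corrupted. So the invariant ``every corrupted $x_e$ has both endpoints corrupted, and the corrupted unprimed vertices are exactly $\{v_1,\dots,v_k\}$'' is preserved. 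At the end of phase~4, the corrupted $x_e$ are therefore exactly the edges inside $\{v_1,\dots,v_k\}$. Phase~5 uses snapshots $\{x_{uv},y_{i,j}\}$, and each $y_{i,j}$ appears only in phase~5. So $y_{i,j}$ must be corrupted there, which requires some corrupted $x_{uv}$ with $col(u)=i$ and $col(v)=j$. The same invariant argument shows that phase~5 corrupts no new $x_e$: that would need $y_{i,j}$ already corrupted, and any corrupted $y_{i,j}$ traces back to an already-corrupted $x$.

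\textbf{Conclusion and main obstacle.} The only edge with endpoints in $V_i$ and $V_j$ that can be corrupted is $v_iv_j$, so $v_iv_j\in E$ for all $i<j$, and $\{v_1,\dots,v_k\}$ is a multicolored clique. The main obstacle is the invariant bookkeeping in phases~4 and~5: corruption in those phases can run backwards, from $x_{uv}$ to $u$ or from $y_{i,j}$ to $x_{uv}$. This has to be excluded carefully, using that $S$ has no spare vertex and that each such backward step needs a previously corrupted $x$ or $y$, which by induction cannot exist outside the clique.
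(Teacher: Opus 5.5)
Your setup and phases 1--4 are sound, and they follow the paper's route: extract $Cl=\{v_1,\dots,v_k\}$ from $S$ via Lemma~\ref{cl:size-multicolor}, then track the corrupted vertices snapshot by snapshot. Your phase-4 invariant is correct and more careful than the paper's text.

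The step that fails is in phase~5. The assertion that phase~5 corrupts no new $x_e$ is false. So is your final sentence that $v_iv_j$ is the only edge between $V_i$ and $V_j$ whose vertex $x_e$ can be corrupted. Suppose $y_{i,j}$ has been corrupted in the snapshot $\{x_{v_iv_j},y_{i,j}\}$. Suppose a later phase-5 snapshot is $\{x_{uv},y_{i,j}\}$ with $col(u)=i$, $col(v)=j$ and $u\notin Cl$. Then $x_{uv}$ is still uncorrupted, so $c$ has exactly one non-corrupted neighbor and corrupts $x_{uv}$.

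Your justification says such a step needs $y_{i,j}$ already corrupted, which traces back to a corrupted $x$. That only shows the backward step is preceded by a legitimate corruption of $y_{i,j}$; it does not rule the step out. So the backward corruption $y_{i,j}\to x_{uv}$ you single out as the main obstacle cannot be excluded, and the invariant you carry through phase~5 does not hold.

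What the argument needs is that such backward steps are harmless, not that they are impossible. Fix $i<j$ and let $\{x_{uv},y_{i,j}\}$ be the snapshot in which $y_{i,j}$ becomes corrupted; $x_{uv}$ must be corrupted at its start. Each $x_e$ occurs in exactly one phase-5 snapshot, which for $e=uv$ is this one, so $x_{uv}$ cannot have been corrupted earlier in phase~5. Alternatively, every phase-5 snapshot containing $x_{uv}$ also contains $y_{i,j}$. A phase-5 corruption of $x_{uv}$ would therefore need $y_{i,j}$ corrupted beforehand, which is impossible before this snapshot.

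Hence $x_{uv}$ was corrupted by the end of phase~4, and your phase-4 invariant gives $u,v\in Cl$. That is, $u=v_i$ and $v=v_j$, so $v_iv_j\in E$. This is how the paper concludes: it reasons only about \emph{the snapshot that corrupted $y_{i,j}$}. Replacing your phase-5 invariant with this first-corruption argument repairs the proof.
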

    \begin{proof}
        Suppose that $"TZ"(\mathcal{G})= k+1$. Hence, there exists a "temporal zero forcing set" $S$ of $\mathcal{G}$.

        By Claim~\ref{cl:size-multicolor}, either $c$ or $c'$ belongs to $S$, as well as, for each $i\in[1,k]$, some vertex $v_i$ or $v_i'$, with $v_i\in V_i$.
        Consider the set of vertices  $$Cl=\{v_i: i\in[1,k] \text{ and } (v_i\in S \text{ or } v_i'\in S)\}$$

We will prove that $Cl$ is a multicolored clique of $G$. First, by construction, each vertex in $Cl$ has a distinct color. We now prove that for all $u,v\in Cl^2$, $uv\in E$.

Note that before step 4, only $c$, $c'$ and vertices of the form $v$ or $v'$, with $v\in Cl$, are "corrupted". In step 5, a vertex $x_{uv}$ gets "corrupted" if and only if $u$ and $v$ were "corrupted", which means that they are in $Cl$. Let us consider some $y_{i,j}$. As this vertex is not in $S$, it must have been "corrupted" during step 5. As each snapshot in which $y_{i,j}$ appears is of the form $\{x_{uv},x_{i,j}\}$ with $u\in V_i$ and $v \in V_j$, the snapshot that "corrupted" $y_{i,j}$ can only happen if $u=v_i$ and $v=v_j$. This implies that $v_iv_j$ is an edge of $G$.

Hence, $Cl$ forms a multicolored clique of $G$.
    \end{proof}

We now consider a specific case: $k=n-T$ on "temporal graphs" with a star as the "underlying graph". Note that with a star as the "underlying graph", at most one vertex can be "corrupted" in each snapshot (either the center gets "corrupted", or it "corrupts" its only "non-corrupted" neighbor). In this scenario, each snapshot must corrupt exactly one vertex. This case can be solved with a polynomial-time algorithm. For a larger "lifespan", it suffices to guess which snapshots can be ignored for "corruption". Hence, if $T-n+k$ is polylogarithmic, we get a polynomial number of guesses, as stated in Corollary~\ref{cor:log-star}.

\begin{theorem}\label{th:opt-star}
  \textsc{"Temporal Zero Forcing Set@@problem"} can be solved in polynomial time when $k = n - T$, where $T$ is the "lifespan" of the input "temporal graph" and the "underlying graph" of the input "temporal graph" is a star.
\end{theorem}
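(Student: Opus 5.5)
The plan is a counting argument that pins the process down completely, followed by a characterization that can be checked directly. Write $c$ for the center, $L_t=N_{G_t}(c)$ for the set of leaves present at snapshot $t\in[1,T]$, and $n=|V|$.

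\textbf{Step 1 (counting).} As noted before the statement, at most one vertex is "corrupted" per snapshot. A set $S$ therefore needs $n-|S|\le T$, so $"TZ"(\mathcal{G})\ge n-T=k$. Any "temporal zero forcing set" of size $k$ (a larger $S$ can be padded down only if a size-$k$ one exists, so we test for size exactly $k$ or report that none exists) forces \emph{every} snapshot $1,\dots,T$ to corrupt exactly one new vertex. This rigidity drives everything that follows.

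\textbf{Step 2 (the center).} A leaf's only neighbor is $c$, so before $c$ is "corrupted" the only possible "corruption" is of $c$ itself: some leaf $v\in S\cap L_t$ has $N_{G_t}(v)=\{c\}$. Since no snapshot may be wasted, either $c\in S$, or $c$ is "corrupted" at snapshot $1$, which happens if and only if $S\cap L_1\neq\emptyset$. Let $t_0=1$ in the first case and $t_0=2$ in the second. From snapshot $t_0$ on, $c$ is "corrupted", and snapshot $t$ corrupts a vertex exactly when $L_t$ contains exactly one leaf that is "not corrupted" before $t$.

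\textbf{Step 3 (characterization).} Let $U$ be the set of leaves outside $S$. The key claim is that each $u\in U$ must be "corrupted" at $\mathrm{first}(u)$, its first appearance in a snapshot $t\ge t_0$. Indeed, $u$ is "not corrupted" at that snapshot, so for the snapshot to corrupt anything, $u$ must be the unique "non-corrupted" member and it is the one corrupted. Hence the condition is exactly that $\{\mathrm{first}(u):u\in U\}$ hits every $t\in[t_0,T]$ exactly once. The converse direction is a direct simulation of the process along the order of first appearances, and it is routine. So a solution exists if and only if we can pick one leaf $u_t$ with $\mathrm{first}(u_t)=t$ for each $t\in[t_0,T]$. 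We then set $S=V\setminus\{u_t\}$ in the case $c\in S$. In the other case we set $S=V\setminus(\{c\}\cup\{u_t\})$ and additionally require $S\cap L_1\ne\emptyset$.

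\textbf{Step 4 (checks and the obstacle).} Several points need checking: the leaves that never appear after $t_0$ are forced into $S$; the cardinality comes out to exactly $n-T$; and in the second case the leaf corrupting $c$ must lie in $L_1$ but not be one of the chosen $u_t$. This last requirement is harmless, because $u_t$ with $t\ge2$ and the $S$-leaf in $L_1$ are counted separately; the one point to verify is that not all of $L_1$ is forced into $U$. Each of these checks is a linear scan over the snapshots, so the algorithm tries both cases and runs in polynomial time. The main obstacle I expect is the claim of Step 3, specifically making rigorous that a leaf cannot be "saved for later": its first appearance after $t_0$ would otherwise leave that snapshot with two "non-corrupted" leaves or none, and the no-wasted-snapshot property from Step 1 is exactly what rules this out.
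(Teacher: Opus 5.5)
Your proposal is correct and follows the same route as the paper. The counting bound forces exactly one corruption per snapshot, you split on whether $c\in S$ or $c$ is corrupted in snapshot $1$, and the leaves whose first appearance (from snapshot $t_0$ on) is at $t$ — the paper's sets $W_t$ — must contain exactly one vertex outside $S$.

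Your treatment of the second case is actually more careful than the paper's. You measure first appearances only over snapshots $t\ge 2$ and check that some leaf of $L_1$ is not forced into $U$. The paper's formula $W_i=V_i\setminus(\{u\}\cup\bigcup_{j<i}V_j)$, read literally, treats all of $V_1$ as corrupted after snapshot $1$, although only $S\cap V_1$ is. For example, take the star with center $c$ and leaves $a,b$, with $L_1=\{a,b\}$ and $L_2=\{b\}$: the set $S=\{a\}$ is a solution that this formula misses.
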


\begin{proof}
The key element relies on the fact that, if the "underlying graph" is a star, at most one vertex can get "corrupted" in each snapshot. Either the center (called $c$) gets "corrupted" by one of its neighbors, or the center "corrupts" one of its neighbors. The latter case can only happen if all the other vertices are already "corrupted".
If $k = n - T$, one vertex must get "corrupted" in each snapshot. In particular, after the first snapshot, the center vertex must be "corrupted", and at the beginning of all snapshots after the first one, all vertices connected to the center but one must be "corrupted".

Let $V_i$, for $i\in[1,T]$, be the set of vertices connected to the center in snapshot $i$. We will consider two cases:
\begin{itemize}
\item The center $c$ is in $S$. This implies that, for all $i\in[1,T]$; after snapshot $i$, $V_i$ is fully "corrupted". In particular, in snapshot $i$, all vertices but one in $W_i=V_i\setminus\bigcup_{j<i}V_j$ must be in $S$; otherwise, no "corruption" could happen in snapshot $i$. On the other hand, if $W_i=\emptyset$, no "corruption" can happen in that snapshot, which implies that no zero forcing set of size $n-T$ exists.

We deduce that a "temporal zero forcing set" which contains $c$ exists if and only if, for all $i\in[1,T]$, $W_i\neq\emptyset$. This is decidable in polynomial time. Moreover, the construction is straightforward: choose $c$, and for each $W_i$, select all its vertices but one.
\item A neighbor of $c$ initially "corrupts" $c$. For each vertex $v\in V_1$, we consider the case where $v$ "corrupts" $c$ in snapshot 1. We define then, for each $i\ge2$, $W_i=V_i\setminus\left(\{u\}\bigcup_{j<i}V_j\right)$, and do the same reasoning as above.

A "temporal zero forcing set" that "corrupts" $c$ in the first snapshot exists if and only if, for some $v\in V_1$, the corresponding sets $W_i$, for all $i\ge2$, are non-empty.
\end{itemize}
Testing both scenarios can be done in polynomial time, which concludes the proof.
\end{proof}

This result allows us to have an algorithm when $k < n-T$, by guessing the snapshots in which a "corruption" happens:

\begin{corollary}
  \textsc{"Temporal Zero Forcing Set@@problem"} can be solved in time $O\left(\binom{T}{k}F(n,T)\right)$ when the "underlying graph" of the input "temporal graph" is a star, where $F(n,T)$ is the complexity to solve the problem when $k=n-T$.
\end{corollary}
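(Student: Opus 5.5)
We reduce to Theorem~\ref{th:opt-star} by guessing which snapshots actually perform a corruption.

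\textbf{Counting argument.} As observed before Theorem~\ref{th:opt-star}, in a temporal star at most one vertex becomes "corrupted" per snapshot. A "temporal zero forcing set" $S$ of size $k$ must therefore have exactly $n-k$ of the $T$ snapshots performing a corruption. (If $|S|<k$ works, we pad $S$ arbitrarily; supersets of zero forcing sets remain zero forcing sets, since adding corrupted vertices never removes a corruption: the center then simply has fewer non-corrupted neighbors.) It is also convenient to ask for a set of size exactly $k$, since the minimum size $"TZ"(\mathcal{G})$ can be found by trying each $k$.

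\textbf{Guessing the active snapshots.} The algorithm enumerates all subsets $I\subseteq[1,T]$ of size $n-k$, the candidate sets of ``active'' snapshots. There are $\binom{T}{n-k}=\binom{T}{T-n+k}$ of them; the statement writes $\binom{T}{k}$, and in any case the count is bounded by a binomial in $T$. For each guess we build the temporal star $\mathcal{G}_I$ whose snapshots are those of $\mathcal{G}$ indexed by $I$, kept in their original order and renumbered $1,\dots,n-k$. Its "lifespan" is $T_I=n-k$, so $k=n-T_I$, and the algorithm of Theorem~\ref{th:opt-star} decides $\mathcal{G}_I$ in time $F(n,T_I)\le F(n,T)$.

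\textbf{Correctness.} Correctness reduces to showing that $\mathcal{G}$ has a "temporal zero forcing set" of size $k$ if and only if some $\mathcal{G}_I$ does.

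For the forward direction, let $I$ be the set of snapshots in which a corruption happens for $S$. The corrupted set is unchanged across the snapshots outside $I$, so replaying only the snapshots of $I$ yields the same sequence of corrupted sets. Hence $S$ fully corrupts $\mathcal{G}_I$.

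For the converse, take $S$ that corrupts $\mathcal{G}_I$ and run it on $\mathcal{G}$. Extra snapshots only add corruptions, and the temporal process is monotone in the current corrupted set: if $A\subseteq B$, one snapshot applied to $B$ corrupts everything it would corrupt from $A$. This holds on a star because a vertex forcing from $A$ has a single non-corrupted neighbor $v$, and from $B$ it either still forces $v$ or $v$ is already corrupted. So by induction, after the snapshots up to each $i\in I$, the run on $\mathcal{G}$ has a superset of the corrupted set of the run on $\mathcal{G}_I$. It follows that $\mathcal{G}$ is fully corrupted.

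\textbf{Main obstacle.} The delicate step is this monotonicity claim, which must be checked against the exact rule in Definition~\ref{def:corr}. Extending the corrupted set can never block a force: the only concern is that a forcer's unique non-corrupted neighbor might already be corrupted, and in that case the goal is met anyway. Checking this carefully, together with summing $F$ over all guesses, gives the claimed running time.
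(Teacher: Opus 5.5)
Your proposal is correct and follows the paper's argument: guess the set of snapshots in which a corruption occurs, keep only those snapshots, and run the algorithm of Theorem~\ref{th:opt-star} on the resulting star. The paper states only this idea, and your replay argument (forward direction) and monotonicity argument (converse) supply the verification it leaves implicit. Your count $\binom{T}{n-k}$ of guesses is the right one, and you were right to flag the mismatch with the stated $\binom{T}{k}$: the paper's ``pick each possible set of $k$ snapshots'' agrees with your count when $k$ is read as the number of corrupting snapshots rather than the target set size.
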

\begin{proof}
The idea is to pick each possible set of $k$ snapshots and run the best algorithm from Theorem~\ref{th:opt-star}. The complexity comes from the fact that we have $\binom{T}{k}$ possible sets.
\end{proof}

 This algorithm can then be used for the case where $T-k$ is polylogarithmic in the size of the input "temporal graph":

\begin{corollary}\label{cor:log-star}
  \textsc{"Temporal Zero Forcing Set@@problem"} can be solved in polynomial time on a "temporal graph" $\mathcal{G}$ when we have $T-k$ to be polylogarithmic in $|V(\mathcal{G})|$ and 
 the  "underlying graph" of $\mathcal{G}$ is a star.
\end{corollary}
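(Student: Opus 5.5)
The plan is to derive Corollary~\ref{cor:log-star} directly from the previous corollary. That corollary solves the problem on temporal stars in time $O\left(\binom{T}{k}F(n,T)\right)$, where $F(n,T)$ is the running time of the algorithm of Theorem~\ref{th:opt-star}. Two things then remain: bound $F(n,T)$ by a polynomial, and bound the number of guesses $\binom{T}{k}$ using the hypothesis that $T-k$ is polylogarithmic in $n=|V(\mathcal{G})|$.

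For the first part, I would read off the cost of Theorem~\ref{th:opt-star}. There are at most $|V_1|+1\le n$ choices for how the center $c$ becomes corrupted: either $c\in S$, or some $v\in V_1$ corrupts $c$ in snapshot $1$. For each choice, the algorithm computes the sets $W_i$ for $i\in[1,T]$ and tests that each is non-empty. This takes $O(nT)$ time per choice, so $F(n,T)=O(n^2T)$, which is polynomial in the input size. The guessing step of the previous corollary must also be justified: it fixes which snapshots are the ones in which a corruption happens. The sub-instance is then run on exactly those snapshots, and the other snapshots are treated as useless. I would check that discarding a snapshot in which no corruption occurs does not change the corrupting function. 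This holds because in a star each snapshot corrupts at most one vertex, and a snapshot with no corruption leaves $\texttt{corr}$ unchanged.

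For the counting part, I would use the symmetry $\binom{T}{k}=\binom{T}{T-k}$ together with the standard estimate $\binom{T}{j}\le T^{j}$, with $j=T-k$. With $T-k\le \log^{c} n$, this gives at most $T^{\log^c n}$ guesses, each costing $O(n^2T)$ time. The total is obtained by multiplying these two bounds.

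The main obstacle is exactly this last estimate. The bound $T^{\log^c n}$ is polynomial only when the exponent is $O(\log n/\log T)$. For a genuinely polylogarithmic $T-k$ it is merely quasi-polynomial. To obtain the statement as worded, I would first look for extra structure that keeps the effective number of guesses polynomial. One option is to argue that in a temporal star, snapshots with $V_i\subseteq\bigcup_{j<i}V_j$ can never corrupt anything. They can then be removed in a preprocessing step, leaving only the ``ambiguous'' snapshots to be guessed. If that reduction fails to bring the count down, I would state the result under the sharper reading that $T-k=O(\log n)$ with $T$ polynomial in $n$. In that case, $\binom{T}{T-k}\le (eT/(T-k))^{T-k}$ is polynomial when $T-k$ is a constant multiple of $\log n$ and $T/(T-k)$ is bounded. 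I would flag explicitly which of these regimes the argument actually covers.
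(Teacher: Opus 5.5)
Your plan is the paper's plan. The paper gives no argument beyond invoking the preceding corollary and asserting that the number of guesses is then polynomial. Your diagnosis of the last step is correct, and it is a genuine gap in that argument, not a weakness of your estimate: with $T=n$ and $T-k=\log n$, already $\binom{T}{T-k}\ge (n/\log n)^{\log n}$, which is super-polynomial. There is also a slip you inherited. The snapshots in which a corruption occurs number exactly $n-|S|$, so the enumeration should range over $\binom{T}{n-k}=\binom{T}{T-n+k}$ sets, which matches the paper's earlier remark about $T-n+k$. The $\binom{T}{k}$ on which your symmetry step relies is a typo. Searching for extra structure will not save the statement as worded. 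Append to the star of Theorem~\ref{th:np-hard} $P$ pairs of fresh leaves $p_i,q_i$, each pair alone in one new final snapshot. Each pair forces one of its two vertices into the set and interacts with nothing else. Hence $\mathrm{TZ}$, $T$ and the target $k$ all grow by exactly $P$, while $T-k=O(m)$ is unchanged, where $m$ is the number of clauses. Taking $P=2^{\lceil\sqrt{T-k}\,\rceil}$ gives $T-k\le\log^2 n$. A polynomial algorithm in this regime would then decide \textsc{3-SAT} in time $2^{O(\sqrt m)}$, contradicting ETH. Appending single fresh leaves instead does the same for the reading with $T-n+k$.

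Your first rescue attempt, the preprocessing, is unsound. A snapshot with $V_i\subseteq\bigcup_{j<i}V_j$ can corrupt, because a leaf first seen next to another uncorrupted leaf is only corrupted when it reappears. For instance, take leaves $a,b$ and snapshots $\{a,b\},\{a\},\{b\}$. The set $\{c\}$ corrupts everything, yet your rule deletes snapshots $2$ and $3$. That rule is valid only in the regime $k=n-T$ of Theorem~\ref{th:opt-star}, where every snapshot must corrupt. Your other fallback, $T-k=O(\log n)$ with $T/(T-k)$ bounded, amounts to $T=O(\log n)$. In that case, enumerating all $2^T$ subsets of snapshots is trivially polynomial. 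What this route actually yields is polynomial time when $T-n+k=O(1)$, and quasi-polynomial time $T^{O(\mathrm{polylog}\, n)}\cdot\mathrm{poly}(n,T)$ when $T-n+k$ is polylogarithmic; that is the regime to flag. Finally, when discarding non-corrupting snapshots you check only one direction. The converse is that a solution found on the chosen snapshots remains one on all of them. This needs the monotonicity of the rule: a larger corrupted set never blocks a corruption, so inserted snapshots cannot hurt.
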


\section{Bounded-Degree Trees}\label{sec:trees}

As seen in previous sections, the problem is NP-hard on stars, which are trees of unbounded degree, since their center has degree $n-1$. However, we now provide a polynomial algorithm when the maximal degree of the graph is bounded. We do this using dynamic programming on the tree, from leaves to the root. The bound on the degree allows us to store, for each vertex, a bounded amount of information, limited by the number of children of the vertex, given some arbitrary rooting of the tree.

\begin{theorem}\label{th:trees}
  \textsc{"Temporal Zero Forcing Set@@problem"} can be solved in polynomial time when the "underlying graph" of the input "temporal graph" is a bounded degree tree.
\end{theorem}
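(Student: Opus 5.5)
Root the underlying tree at an arbitrary vertex $r$ and let $\Delta$ be the maximum degree. We solve the problem by bottom-up dynamic programming. The key observation is that the state of the whole process, as seen from the edge $\{v,p(v)\}$ (with $p(v)$ the parent of $v$), is summarized by very little information. Indeed, the only way the subtree $T_v$ interacts with the rest of the tree is through the edge $vp(v)$ at the times in $\lambda(vp(v))$. All that matters is:
\begin{itemize}
\item the time $t_v\in[0,T]\cup\{\infty\}$ at which $v$ becomes corrupted;
\item the time $t_{p(v)}$ at which $p(v)$ becomes corrupted, seen as an external input;
\item whether $v$ corrupts $p(v)$, or $p(v)$ corrupts $v$, or neither across this edge, and at which time.
\end{itemize}
Since each vertex is corrupted exactly once and a corruption across an edge happens at a single snapshot, these are $O(T^2)$ possibilities per edge.

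For each vertex $v$ and each pair $(t_v,t_{p(v)})$ we define $f_v(t_v,t_{p(v)})$ as the minimum number of vertices of $S\cap T_v$ such that the process restricted to $T_v$ corrupts every vertex of $T_v$ by time $T$. This assumes $p(v)$ is corrupted exactly at time $t_{p(v)}$ (imposed from outside) and $v$ is corrupted exactly at $t_v$. Here $t_v=0$ means $v\in S$. If $t_v>0$, then $v$ is corrupted at $t_v$ either by $p(v)$ (checked later at the parent) or by one of its children. In addition, we record whether $v$ is relied upon to corrupt $p(v)$ at time $t_{p(v)}$.

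The consistency condition for a corruption event is local. A vertex $w$ corrupts a neighbor $x$ at time $i$ iff $w$ is corrupted before $i$, $x$ is not, $x$ is adjacent to $w$ in $G_i$, and all other neighbors of $w$ in $G_i$ are corrupted before $i$. By Definition~\ref{def:corr}, $x$ becomes corrupted at the first time such a witness exists. So once the corruption times of $w$ and of all its tree-neighbors are fixed, checking that "$v$'s time $t_v$ is exactly right" and "$v$ correctly performs whatever corruptions are claimed" only depends on $N[v]$, i.e.\ on at most $\Delta+1$ times.

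To compute $f_v$, we enumerate all tuples of corruption times $(t_{c_1},\dots,t_{c_d})$ of the children of $v$ ($d\le\Delta$), giving $(T+2)^{\Delta}$ choices. For each tuple we check local consistency at $v$. This means two things. First, $t_v$ is the earliest time at which some corrupted neighbor of $v$ has $v$ as its unique uncorrupted neighbor in that snapshot (or $t_v=0$). Second, each child $c_j$ with $t_{c_j}>0$ forced by $v$ is indeed forced at exactly that time. We then set
\[
f_v(t_v,t_{p(v)})=[t_v=0]+\min\sum_j f_{c_j}(t_{c_j},t_v),
\]
where the minimum ranges over the consistent tuples. The "earliest time" requirement for a child $c_j$ is shared between $v$ and the grandchildren. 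We handle it by letting $f_{c_j}(t_{c_j},t_v)$ already certify that no witness inside $T_{c_j}$ fires before $t_{c_j}$, and letting the check at $v$ certify that $v$ is not a witness for $c_j$ before $t_{c_j}$. The answer is $\min_{t_r} f_r(t_r,\cdot)$ with a dummy parent, subject to all times being at most $T$. The total running time is $O(n\,T^{\Delta+2}\cdot\mathrm{poly})$, which is polynomial for bounded $\Delta$. The snapshots can be restricted to those where some edge appears, so $T$ is at most the input size.

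The main obstacle is proving correctness of the decomposition. One must show that a global corrupting function is exactly a labelling of vertices by corruption times that is locally consistent at every closed neighborhood. This should be proved by induction on snapshots, using that $\texttt{corr}(i)$ is determined by $\texttt{corr}(i-1)$. The delicate point is the "if and only if" in Definition~\ref{def:corr}: corruption is mandatory whenever a witness exists. Hence the local checks must encode both existence of a witness at time $t_x$ and non-existence of any witness at earlier times, and each witness condition involves only the neighborhood of the witness. Thus every condition lives in some $N[w]$, which is exactly the set of times available when processing $w$ in the DP.
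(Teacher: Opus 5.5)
Your proposal is correct. Like the paper, it is a bottom-up dynamic program over the rooted tree, but the bookkeeping differs in two respects.

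\textbf{State design.} The paper indexes $\sigma_v$ by an assignment $N_R[v]\to[0,T]\times V$. This gives every vertex of the closed neighborhood a time and an explicit corrupter, and children are merged by agreement on $\{v,c\}$. You instead index by the edge interface $(t_v,t_{p(v)})$ together with the direction of a corruption across $vp(v)$, and enumerate the children's times inside the transition. This gives $O(T^2)$ states per vertex and running time $n\cdot T^{O(\Delta)}$, instead of the paper's $((T+1)d)^{(d+1)^2}\cdot(d+T)^{O(1)}\cdot n$. Your remark that empty snapshots can be discarded, so that $T$ is bounded by the input size, is a point the paper does not make.

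\textbf{Semantics.} The paper's witness only requires each claimed corruption to be valid: the corrupter is adjacent in $G_{t_w}$ and all its other neighbors are corrupted strictly earlier. It never asks that a vertex be corrupted at the earliest possible time. This is sound because the forcing condition is monotone in the set of corrupted vertices. By induction, $\{w : t_w\le i\}\subseteq\texttt{corr}(i)$ for the actual process started from $\texttt{pcorr}^{-1}(0)$; the paper uses this fact without stating it. You instead encode the ``if and only if'' of Definition~\ref{def:corr} exactly, which is why you need the extra ``no earlier witness'' checks. These checks are indeed local to the potential witness's neighborhood, so your route works and needs no monotonicity lemma. Proving that lemma once would, however, let you delete those checks entirely.

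\textbf{Two imprecisions to fix.}
\begin{itemize}
\item Whether $t_v$ is ``exactly right'' does not depend only on $N[v]$. It depends on the neighborhoods of $v$'s neighbors, which is exactly why the checks must be distributed as in your last paragraph.
\item The displayed recurrence for $f_v$ drops the third interface component from your own bullet list. That component records whether $v$ expects $p(v)$ to corrupt it at $t_v$, and whether $v$ corrupts $p(v)$ at $t_{p(v)}$. Without it, the parent cannot tell which corruption across the edge it must verify or may rely on.
\end{itemize}
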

\begin{proof}
  Let $d$ be an integer.
  Let $\mathcal{G} = (V,E,\lambda)$ be a "temporal graph" such that $R = (V,E)$ is a tree of maximum degree at most $d$%\mikael{Veut-on mettre $\Delta$ pour le degré max ?}.

  We choose a vertex $ r \ in V$ arbitrarily and root the tree $R$ at $r$.
  Given a vertex $v \in V$, we denote by $R_v$ the subtree of $R$ rooted in $v$,
  by $\texttt{child}_R(v)$ the set of every child of $v$ in $R$, and
  by $\texttt{parent}_R(v)$ the set containing only the parent of $v$ if $v$ is not the root, and an empty set if $v$ is the root.
  In the following, we present a dynamic programming algorithm that processes the tree $R$ in a bottom-up fashion.

  \paragraph{Definition of the dynamic programming table.}
  We first describe the dynamic programming table we need to store for each vertex $v \in V$.
  The main idea is that, for each vertex in the closed neighborhood of $v$, we store in $\sigma_v$ the ``price'' of a "corruption" scenario, which is a guess describing when each vertex in the neighborhood of $v$ has been "corrupted" and by which other vertex (itself if it was "corrupted" at the initial time). To each of these scenarios, we associate through $\sigma_v$ its price, that is, the minimum size of a partial "temporal zero forcing set" matching the prediction.% We also store in $R_v$ how many vertices were already "corrupted" at time $0$. 

  Formally, we define, for each $v \in V$, the function
  $$ \sigma_v : (N_R[v] \to [0,T] \times V) \to \mathbb{N} \cup \{\bot\}. $$
  This function signature associates an integer (or the value $\bot$) with an entry of $(N_R[v] \to [0,T] \times V)$. Such an entry can be understood as a record of "corruption" on a local scale, guessing for each vertex in the (underlying) neighborhood of $v$ a time for their "corruption" in $[0, T]$. To each such $\entry$, we make so that $\sigma_v$ returns the minimum size of a "temporal zero forcing set" matching the prescribed corruptions of $\entry$. 
  Formally, we define $\sigma_v$ as the function such that for each element $\entry$ of $N_R[v] \to [0,T] \times V$, the value $\sigma_v(\entry)$ is the minimum value $k$ such that
  there exists a function $\texttt{pcorr}: V(R_v) \cup \texttt{parent}(v) \to ([0,T] \times V)$, called a ""witness"" of $\entry$, respecting that:
  \begin{enumerate}[start=1,label={(\Alph*)}]
  \item \label{enum:pcorr1} for each $w \in N_R[v]$, $\texttt{pcorr}(w) = \entry(w)$,
  \item \label{enum:pcorr2} for each $w \in V(R_v) \cup \texttt{parent}(v)$ with $\texttt{pcorr}(w) = (t_w, x_w)$, either
    \begin{enumerate}[start=1,label={(\roman*)}]
    \item $t_w = 0$,
    \item $x_w \in V(R_v) \cap N_{G_{t_w}}(w)$ such that for each $w' \in N_{G_{t_w}}[x_w] \setminus \{w\}$, if we denote $(t_{w'}, x_{w'}) = \texttt{pcorr}(w')$, we have $t_{w'} < t_w$, or
    \item $x_w \in V \setminus V(R_v)$.
    \end{enumerate}
  \item \label{enum:pcorr3} $k = | \texttt{pcorr}^{-1}(0) \cap V(R_v)|$

  \end{enumerate}
  or $\bot$ if such a function $\texttt{pcorr}$ does not exist.
  %We call $\texttt{pcorr}$ a ""partial witness"" if it respects only \ref{enum:pcorr1} and \ref{enum:pcorr2}.

  Roughly speaking, there exists a partial "corruption" of the vertices of $V(R_v) \cup \texttt{parent}_R(v)$ compatible with the timing and corrupting vertices provided by $\entry$. The compatibility is checked through the "corruption rule" where a vertex is either in the "temporal zero forcing set" (and $t_w = 0$, which is case (B.i)), the only "non-corrupted" neighbor of a "corrupted" vertex in some snapshot (case (B.ii)), or outside of the subtree for which the prescription $\sigma_v$ is defined (case (B.iii)).

  Let $k_{\texttt{sol}} = \min_{f \in  N_R[r] \to [0,T] \times V} \sigma_r(f)$, where $\bot$ is considered bigger than any element of $\mathbb{N}$. 
  By definition of $\sigma_r$, $k_{\texttt{sol}}$ is the size of the minimum "temporal zero forcing set" of $\mathcal{G}$.
  This can be seen by taking for instance the function $\texttt{corr}: [0,T] \to 2^V$ such that for each $i \in [0,T]$, $\texttt{corr}(i) = \texttt{pcorr}^{-1}(i)$.

  \paragraph{Construction of the dynamic programming table.}
  We explain now how to construct the function $\sigma_v$ for each $v \in V$.
  For each child $c$ of $v$, we assume that $\sigma_{c}$ is correctly constructed.
  For each $\entry: N_R[v] \to [0,T] \times V$, we construct $\sigma_v(\entry)$ to be the minimum value of $k$ such that for each $c \in \texttt{child}_R(v)$, there exists $\entry_c: N_R[c] \to [0,T] \times V$ such that:
  \begin{enumerate}[start=1,label={(\alph*)}]
  \item \label{enum:sigma1} $k_c = \sigma_c(\entry_c)$ is in $\mathbb{N}$ and such that for each $w \in N_R[c] \cap N_R[v]$, $\entry_c(w) = \entry(w)$,
  \item \label{enum:sigma2} for each $z \in \texttt{child}_R(v) \cup \texttt{parent}_R(v)$, if $\entry(z) = (v, t_z)$ for some $t_z \in [1,T]$, then $z \in N_{G_{t_z}}[v]$ and for each $w \in N_{G_{t_z}}[v] \setminus \{z\}$, we have that $\entry(w) = (x_w,t_w)$ with $t_w < t_z$, and
  \item \label{enum:sigma3} $k = 1 + \sum_{c \in \texttt{child}_R(v)}k_c$ if $\entry(v) = (0,x)$, for some $x \in V$, and $k = \sum_{c \in \texttt{child}_R(v)}k_c$ otherwise.
  \end{enumerate}
  or $\bot$ if the conditions cannot be satisfied.

    \paragraph{Correctness of the construction.}
  We now prove the algorithm's correctness, that is $\sigma_r$ can be constructed inductively, and that for each value $\entry$, we have that $\sigma_r(\entry)$ admits a "witness" satisfying \ref{enum:pcorr1}, \ref{enum:pcorr2}, and \ref{enum:pcorr3}.
  We prove it by induction on $R$, and consider $v \in V$. We assume that, for each child $c$ of $v$, the function $\sigma_{c}$ is correctly constructed, and $\sigma_v$ is the function obtained by the construction described in the last paragraph.

  Let $\entry: N_R[v] \to [0,T] \times V$ such that $\sigma_v(\entry)$ is in $\mathbb{N}$, and let $\entry_c$, $c \in \texttt{child}(v)$ be the considered entries of the table of the children of $v$ used to construct $\sigma_v(\entry)$.
  By assumption, for each $c \in \texttt{child}(v)$, there exists a function $\texttt{pcorr}_c: V(R_c) \cup \{v\} \to ([0,T] \times V)$  that is a "witness" of $\entry_c$.
  First note that, combining \ref{enum:pcorr1} for  $\texttt{pcorr}_c$, $c\in \texttt{child}(v)$, with \ref{enum:sigma1} for $\entry$, we have that $\texttt{pcorr}_c(v) = \entry(v)$.
  We can now define $\texttt{pcorr}: V(R_v) \cup \texttt{parent}(v) \to ([0,T] \times V)$ such that for each $w \in V(R_v) \cup~\texttt{parent}(v)$, if $w \in N_R[v]$, we set $\texttt{pcorr}(w) = \entry(w)$, otherwise, there exists $c \in \texttt{child}(v)$ such that $w \in V(R_c)$ and we set $\texttt{pcorr}(w) = \texttt{pcorr}_c(w)$.
  By construction, $\texttt{pcorr}$ respects Condition~\ref{enum:pcorr1}.
  Combining \ref{enum:sigma2} for $\entry$ with \ref{enum:pcorr2} for each $\texttt{pcorr}_c$, $c\in \texttt{child}(v)$, we obtain Condition~\ref{enum:pcorr2} for $\texttt{pcorr}$.
  We assume that $\entry(v) = (t_v,x_v)$ for some $t_v \in [1,T]$ and $x_v \in V$.
  By construction, we have $|\texttt{pcorr}^{-1}(0) \cap V(R_v)| = \sum_{c \in \texttt{child}(v)}|\texttt{pcorr}_c^{-1}(0) \cap V(R_c)| = \sum_{c \in \texttt{child}(v)}\sigma_c(\entry_c) = \sigma_v(\entry)$, and so, Condition~\ref{enum:pcorr3} is satisfied.
  The same argument applies if $\entry(v) = (0,x_v)$ for some $x_v \in V$.

  \paragraph{Runtime analysis.}
  For each vertex $v \in V$, we need to construct $\sigma_v : (N_R[v] \to [0,T] \times V) \to \mathbb{N} \cup \{\bot\}$.
  The size of $N_R[v] \to [0,T] \times V$ is at most $((T+1) \cdot n)^{d+1}$ where $n = |V|$.
  When constructing $\sigma_v(\entry)$ for some $\entry$ of $N_R[v] \to [0,T] \times V$, we have to check for each entry of the table of each child of $v$ whether we can combine them, providing a running time for the computation of $\sigma_v$ in $\big(((T+1) \cdot n)^{d+1}\big)^{d+1}\cdot (d+T)^{O(1)}$ where the $(d+T)^{O(1)}$ appears for the verification that the constructed entry respect Conditions \ref{enum:sigma1}, \ref{enum:sigma2}, and \ref{enum:sigma3}.
  Thus, we obtain an algorithm running in time $((T+1) \cdot n)^{(d+1)^2}\cdot (d+T)^{O(1)} \cdot n$.

  In the presented algorithm, for ease of notation, we keep the possibility that a vertex can be "corrupted" by any vertex.
  By definition, this is not true; only a vertex's neighbors can corrupt it.
  With this in mind, we can prune these entries, and we obtain that, for a vertex $v \in V$, the number of entries of $\sigma_v$ that make sense is not $((T+1) \cdot n)^{d+1}$ but $((T+1) \cdot d)^{d+1}$, providing an algorithm running in time $((T+1) \cdot d)^{(d+1)^2}\cdot (d+T)^{O(1)} \cdot n$.
\end{proof}

This algorithm can be generalized when the maximal degree is a parameter, to get the following result:

\begin{corollary}
  \textsc{"Temporal Zero Forcing Set@@problem"} is fixed-parameter tractable by the "lifespan" plus the maximum degree when the input graph is a tree.
\end{corollary}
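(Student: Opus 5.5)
The plan is to reuse the dynamic programming algorithm from the proof of Theorem~\ref{th:trees} unchanged, and to reread its running time as a function of the parameters. That proof gives a bound of
\[
((T+1)\cdot d)^{(d+1)^2}\cdot (d+T)^{O(1)}\cdot n ,
\]
where $d$ is the maximum degree of the underlying tree, $T$ is the lifespan and $n=|V|$. This bound is already of the form $f(T+d)\cdot n^{O(1)}$. So the corollary follows once we check that no hidden dependence on $n$ remains in the exponent, and that the algorithm is correct for every $d$ rather than only for a fixed constant $d$.

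The steps are as follows.

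\begin{enumerate}
\item Check that the dynamic programming table and its combination rules (conditions (a)--(c)) never use the assumption that $d$ is a constant. The degree bound only enters through the number of entries of $\sigma_v$. So the construction and the inductive correctness argument hold verbatim when $d$ is part of the input.
\item Redo the counting with the pruned entries. An entry assigns to each of the at most $d+1$ vertices of $N_R[v]$ a time in $[0,T]$ and a corrupter. The corrupter is either the vertex itself, at time $0$, or one of its at most $d$ neighbours in $R$. This gives at most $((T+1)(d+1))^{d+1}$ meaningful entries per vertex.
\item Count the combination cost. We enumerate tuples of child entries, one per child, and there are at most $d$ children. This costs $((T+1)(d+1))^{(d+1)^2}$ per vertex, times a verification cost polynomial in $d+T$.
\item Sum over the $n$ vertices of the tree.
\end{enumerate}

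This yields a running time $g(T,d)\cdot n$ with
\[
g(T,d)=((T+1)(d+1))^{(d+1)^2}(d+T)^{O(1)},
\]
which is computable and depends only on $T+d$. This is exactly fixed-parameter tractability with respect to $T+d$.

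One point needs care: the input size may be dominated by the description of $\lambda$. However, $T$ is a parameter, so each $\lambda(e)\subseteq[0,T]$ has size at most $T+1$, and the input size is $O(nT)$. Checking adjacency in a snapshot $G_t$ inside condition (b) is therefore polynomial.

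The main obstacle is the pruning claim that the corrupter can be restricted to $N_R[w]$. In the table definition, case (B.iii) allows $x_w\in V\setminus V(R_v)$. We must argue that only $\texttt{parent}(v)$ is actually relevant there, since a vertex can only be corrupted by a neighbour in some snapshot, and every snapshot is a subgraph of the tree $R$. This restriction is what keeps the entry count independent of $n$, and it is the step I would verify first.
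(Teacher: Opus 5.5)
Your proposal is correct and is essentially the paper's argument. The corollary is read off the pruned running time $((T+1)\cdot d)^{(d+1)^2}\cdot (d+T)^{O(1)}\cdot n$ of the dynamic program in the proof of Theorem~\ref{th:trees}, and restricting every corrupter to a tree-neighbour is exactly what removes the dependence on $n$ from the base. (One small imprecision in your last paragraph: for $w=\texttt{parent}(v)$ the corrupter outside $V(R_v)$ can be any other tree-neighbour of $\texttt{parent}(v)$, not only a single vertex, but your count of $d+1$ choices per vertex in step 2 already covers this.)
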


\section{Conclusion}\label{sec:ccl}

We have adapted the \textsc{"Zero Forcing Set@@problem"} in "temporal graphs", and shown its complexity. We proved that the problem is \texttt{NP}-hard even when the "underlying graph" is a star, or when the "lifespan", the maximum degree of the "underlying graph", and the maximum degree of every snapshot are bounded. This prompted us to investigate these parameters from different angles. We answered an open question of~\cite{HOGBEN20121994} by showing that computing the "temporal zero forcing number" with "lifespan" $1$ is hard in general. However, we also obtained tractability results on trees of bounded degree. 
Our exploration of \textsc{"Temporal Zero Forcing Set@@problem"} strongly supports our claim that the dynamicity of the network adds a significant layer of hardness to the original "problem@@static". In particular, finding a minimum "temporal zero forcing set" is hard on temporal stars; recall Theorem~\ref{th:np-hard}. One of the main consequences of Theorem~\ref{th:np-bounded-all} and Theorem~\ref{th:w1} is that there really is (mostly) no hope of tractability by the solution size, even under strong hypotheses. Due to Theorem~\ref{th:trees}, as well as the insights from Chapter 4 of~\cite{aazami2008hardness}, we propose the following conjecture.

\begin{conjecture}\label{conj:trees}
  \textsc{"Temporal Zero Forcing Set@@problem"} can be solved in \texttt{FPT}-time parameterized by treewidth and the "lifespan" of the input "temporal graphs".
\end{conjecture}

Up to adapting the terminology, Theorem 4.1.8 of~\cite{aazami2008hardness} implies that Conjecture~\ref{conj:trees} holds if all the snapshots are the same. Additionally, the proof uses a dynamic programming algorithm that we think could be adapted to the "temporal equivalent@@problem". 

In our definition, we have made a single step of "corruption" in each snapshot. Another axis of exploration is allowing several rounds of "corruption" in a single snapshot. What new results can we get in that scenario?

\bibliographystyle{splncs04}
\bibliography{biblio}

\end{document}